\documentclass[12pt,a4paper]{article}

\usepackage{mathtools}

\DeclarePairedDelimiter\floor{\lfloor}{\rfloor}

\usepackage[utf8]{inputenc}		
\usepackage[T1]{fontenc}
\usepackage{multirow}
\usepackage{graphicx}
\usepackage{natbib}
\usepackage[dvipsnames]{xcolor}
\usepackage{lscape}
 
\usepackage{booktabs}
\usepackage{multirow}
\usepackage{graphicx}
\hbadness=10000 \tolerance=10000 \hyphenation{en-vi-ron-ment
in-ven-tory e-num-er-ate char-ac-ter-is-tic}


\usepackage{enumerate}

\newcommand{\biblist}{\begin{list}{}
{\listparindent 0.0cm \leftmargin 0.50cm \itemindent -0.50 cm
\labelwidth 0 cm \labelsep 0.50 cm
\usecounter{list}}\clubpenalty4000\widowpenalty4000}
\newcommand{\ebiblist}{\end{list}}

\usepackage{amsmath,amsthm,amsfonts,amssymb,amscd,amsthm,xspace, dsfont}
\theoremstyle{plain}
\allowdisplaybreaks

\usepackage{latexsym}
\usepackage{amsmath}
\usepackage{amsfonts,amssymb}
\usepackage{multirow}
\usepackage{enumerate}


\newtheorem*{theorem*}{Theorem}

\newtheorem{remark}{Remark}[section]
\newtheorem{proposition}{Proposition}[section]
\newtheorem*{proposition*}{Proposition}
\newtheorem{result}{Result}[section]
\newtheorem*{result*}{Result}

\newtheorem*{lemma*}{Lemma}

\newcommand{\bx}{\mathbf{x}}

\newcommand{\E}{{\mathbb E}}
\newcommand{\V}{{\mathbb V}}

\definecolor{bazaar}{rgb}{0.6, 0.47, 0.48}

\usepackage{hyperref}
\hypersetup{
	colorlinks=true,
	linkcolor=blue,
	filecolor=magenta,
	urlcolor=blue,
	pdftitle={Sharelatex Example},
	bookmarks=true,
	citecolor=blue,
	}

\newcommand{\chapternote}[1]{{%
		\let\thempfn\relax
		\footnotetext[0]{\emph{#1}}
}}

\usepackage[margin=2.8cm]{geometry}

\newcounter{hypH}
		{
		{\vskip 2mm\noindent\refstepcounter{hypH}\textbf{(H\thehypH)}\quad #1}
		\vskip 2mm 
		}

\newcounter{hypC}
		{
		{\vskip 2mm\noindent\refstepcounter{hypC}\textbf{(C\thehypC)}\quad #1}
		\vskip 2mm 
		}

\begin{document}

\baselineskip .3in

\title{\bf {\Large Model-assisted estimation through random forests in finite population sampling}}

\author{Mehdi {\sc Dagdoug}$^{(a)}$,  Camelia {\sc Goga}$^{(a)}$ and  David  {\sc Haziza}$^{(b)}$ \\
	(a) Universit\'e de Bourgogne Franche-Comt\'e,\\ Laboratoire de Math\'ematiques de Besan\c con,  Besan\c con, FRANCE \\
	(b) University of Ottawa, Department of mathematics and statistics,\\ Ottawa, CANADA
}

\date{\today}
\maketitle
\chapternote{Mehdi Dagdoug's research was supported by grants of the region of Franche-Comt\'e and\\ M\'ediam\'etrie.}
\begin{abstract}
In surveys, the interest lies in estimating finite population parameters such as population totals and means. In most surveys, some auxiliary information is available at the estimation stage. This information may be incorporated in the estimation procedures to increase their precision.  In this article, we use random forests to estimate 
the functional relationship between the survey variable and the auxiliary variables. In recent years, random forests have become attractive as National Statistical Offices have now access to a variety of data sources, potentially exhibiting a large number of observations on a large number of variables.  We establish the theoretical properties of model-assisted procedures based on random forests and derive corresponding variance estimators. A model-calibration procedure for handling multiple survey variables is also discussed.  The results of a simulation study suggest that the proposed point and estimation procedures perform well in term of bias, efficiency and coverage of normal-based confidence intervals, in a wide variety of settings. Finally, we apply the proposed methods using data on radio audiences collected by  M\'ediam\'etrie, a French audience company.
\end{abstract}

{\noindent  {\small {\em  Key words:} Model-assisted approach; Model-calibration;  Nonparametric regression; Random forest; Survey data; Variance estimation.
 } }

\section{Introduction}

Since the pioneering work of \cite{sarndal_1980}, \cite{robinson_sarndal_1983} and \cite{sarndal_wright_1984}, model-assisted estimation procedures have attracted a lot of attention in the literature; see also \cite{SarndalLivre} for a comprehensive discussion of the model-assisted approach. At the estimation stage, auxiliary information is often available and can be incorporated in the estimation procedures to increase the precision of the resulting point estimators. The model-assisted approach starts with postulating a working model, describing the relationship between a survey variable $Y$ and a set of $p$ auxiliary variables $X_1, X_2,  \ldots, X_p$. The model is fitted to the sample observations to obtain predicted values, which then serve to build point estimators of population means/totals. Model-assisted estimators are asymptotically design-unbiased and design consistent, irrespective of whether or not the working model is correctly specified, which is an attractive feature; see \cite{SarndalLivre} and \cite{breidt_opsomer_2017}, among others. When the working model holds, model-assisted estimators are expected to be highly efficient. However, when the sample size is small, the use of model-assisted estimators requires some caution as they may suffer from small sample bias. In this article, we use random forests to estimate 
the functional relationship between $Y$ and   $X_1, X_2,  \ldots, X_p.$  In recent years, random forests have become attractive as National Statistical Offices have now access to a variety of data sources, potentially exhibiting a large number of observations on a large number of variables.

Consider a finite population $U = \left\{ 1, ..., k, ..., N \right\}$ of size $N$. We are interested in estimating the population total of a survey variable $Y$, $t_y = \sum_{k \in U} y_k$. We select a sample $S,$ of size $n,$ according to a sampling design $\mathcal{P}(S\mid \mathbf{Z}_U),$ where $\mathbf{Z}_U$ denotes the matrix of design information, available prior to sampling for all the population units. Let $\mathbf{I}_U=(I_1, ..., I_k, ..., I_N)^{\top}$ be the $N$-vector of sample selection indicators such that $I_k=1$ if $k \in S$ and $I_k=0,$ otherwise. The first-order and second-order inclusion probabilities are given by $\pi_k=\mathbb{E}\left[I_k \mid \mathbf{Z}_U\right]$
and $\pi_{kl}=\mathbb{E}\left[I_kI_\ell \mid \mathbf{Z}_U \right],$ respectively. 

A basic estimator of $t_y$ is the well-known Horvitz-Thompson estimator given by 
\begin{equation}\label{ht}
\widehat{t}_{\pi} = \sum_{k \in S} \dfrac{y_k}{\pi_k}.
\end{equation}
Provided that $\pi_k>0$ for all $k \in U$, the estimator (\ref{ht}) is design-unbiased for $t_y$ in the sense $\mathbb{E} \left[\widehat{t}_{y, \pi} \mid \mathbf{y}_U, \mathbf{Z}_U\right]=t_y$, where $\mathbf{y}_U = \left(y_1, y_2, ..., y_N\right)^{\top}$. The Horvitz-Thompson estimator makes no use of auxiliary information beyond what is already contained in the matrix $\mathbf{Z}_U$.

We assume that a  vector $\mathbf{x}_k=(x_{k1}, x_{k2}, \ldots, x_{kp})^{\top}$ of auxiliary variables is available for all $k \in U$. We also assume  that $y_k$, $k \in U,$ are independent realizations from a working model $\xi$, often referred to as a superpopulation model:
\begin{align}\label{mo}
\mathbb{E}\left[y_k\mid \mathbf{X}_k = \mathbf{x}_k\right]  &= m(\mathbf{x}_k), \\
\mathbb{V} (y_k \mid  \mathbf{X}_k = \mathbf{x}_k) &= \sigma^2\nu(\mathbf{x}_k), \nonumber
\end{align}
where $m (\cdot)$ and $\nu(\cdot)$ are two unknown functions and $\sigma^2$ is an unknown parameter. 

Suppose that Model (\ref{mo}) is fitted at the population level and let $\widetilde{m}(\mathbf{x}_k)$ be the population-level fit associated with unit $k$ obtained by fitting a parametric or nonparametric procedure. This leads to the pseudo generalized difference estimator
\begin{equation} \label{dif}
\widehat{t}_{pgd} = \sum_{ k \in U} \widetilde{m} (\mathbf{x}_k) + \sum_{k \in S} \dfrac{y_k - \widetilde{m} (\mathbf{x}_k)}{\pi_k}.
\end{equation}
Because the values $\widetilde{m}(\mathbf{x}_k)$ do not involve the sample selection indicators $I_1, \ldots, I_N$, if follows that $\mathbb{E}\left[\widehat{t}_{pgd} \mid \mathbf{y}_U, \mathbf{Z}_U, \mathbf{X}_U\right]=t_y,$ where $\mathbf{X}_U$ is the $N \times p$ matrix whose $N$ rows are the vectors $\mathbf{x}_1, \ldots, \mathbf{x}_N$. That is, the pseudo generalized difference estimator (\ref{dif}) is design-unbiased for $t_y$. In the sequel, we use the simpler notation $\E_p \left[ \cdot \right]$ instead of $\E \left[ \cdot \rvert \mathbf{Z}_U, \mathbf{X}_U, \mathbf{y}_U\right]$ to denote the expectation operator with respect to the sampling design $\mathcal{P} (S \rvert \mathbf{Z}_U)$. Similarly, the notation $\V_p \left[ \cdot \right]$ is used to denote the design variance of an estimator.  

Most often, the estimator (\ref{dif}) is unfeasible as the population-level fits  $\widetilde{m}(\mathbf{x}_k)$ are unknown. Using the sample observations,  we fit the working model and obtain the sample-level fits $\widehat{m} (\mathbf{x}_k).$ Replacing $\widetilde{m}(\mathbf{x}_k)$ with $\widehat{m} (\mathbf{x}_k)$ in (\ref{dif}), we obtain the so-called model-assisted estimator of $t_y$:
\begin{equation}\label{ma}
\widehat{t}_{ma} = \sum_{ k \in U} \widehat{m} (\mathbf{x}_k) + \sum_{k \in S} \dfrac{y_k - \widehat{m} (\mathbf{x}_k)}{\pi_k}.
\end{equation}
Unlike (\ref{dif}), the estimator (\ref{ma}) is no longer design-unbiased, but can be shown to be design-consistent for $t_y$ for a relatively wide class of procedures $\widehat{m} (\cdot)$. The model-assisted estimator (\ref{ma}) is expressed as the sum of the population total of the predictions $\widehat{m} (\mathbf{x}_k) $  and an adjustment term that can be viewed as a protection against model-misspecification.

If $\widehat{m}(\mathbf{x}_k) =\mathbf{x}_k^{\top} \widehat{\boldsymbol{\beta}}$ with coefficients estimated by weighted least squares, the estimator (\ref{ma}) reduces to the well-known generalized regression (GREG) estimator; e.g., see \citet[Chap. 6]{SarndalLivre}. Model-assisted estimators based on generalized linear models were considered by \cite{lehtonen1998logistic} and \cite{firth1998robust}, among others. There are some practical issues associated with the use of a parametric model such as linear and generalized linear models: they may lead to inefficient  estimators if the function $m(\cdot)$ is misspecified or if the model fails to include interactions or predictors that account for curvature (e.g., quadratic and cubic terms). In contrast, nonparametric procedures are robust to model misspeficiation, which is a desirable property. A number of nonparametric model-assisted estimation procedures have been studied in the last two decades: local polynomial regression \citep{breidt_opsomer_2000}, B-splines \citep{goga_2005} and penalized B-splines \citep{goga_ruiz-gazen}, penalized splines \citep{breidt_claeskens_opsomer_2005, mcconville_breidt_2013}, neural nets \citep{montanari_ranalli_2005}, generalized additive models \citep{opsomer2007model}, nonparametric additive models \citep{wang_wang_2011} and regression trees \citep{toth2011building, mcconville2019automated}. 


	In this paper, we propose a new class of model-assisted estimators of $t_y$ based on random forests (RF).
	Generally speaking, RF is an ensemble method that trains a (large) number of trees and combines them to produce more accurate predictions than a single regression tree would.
%
	Trees define a class of algorithms that recursively split the $p$-dimensional predictor space into distinct and non-overlapping regions. In other words, a tree algorithm  generates a partition of regions or hyperrectangles of $\mathbb R^p$. For an observation belonging to a given region, the prediction is simply obtained by averaging the $y$-values associated with the units belonging to the same region. While regression trees are easy to interpret and allow the user to visualize the partition \citep [pp. 306]{hastie_tibshirani_friedman_2011}, they may suffer from a high model variance, hence their qualification of "weak learners".  A number of tree-based procedures have been proposed with the aim of improving the predictive performances of regression trees, including pruning \citep{Breiman_Friedman_Olshen_Stone1984}, Bayesian regression trees \citep{Chipman1998}, gradient boosting \citep{friedman_2001} and RF \citep{breiman2001random}. 	


Several empirical studies suggest that RF can outperform state-of-the-art prediction models; see e.g. \cite{han2018comparison}, \cite{hamza2005empirical}, \cite{Diaz-Uriarte2006}. RF are widely used due to their predictive performances and their ability to handle small sample sizes with a large number of predictors \citep{scornet_2016_IEEE}. Also, RF algorithms can be parallelized, leading to a decrease in the training time.    RF have been applied in a wide variety of fields, including medicine \citep{Fraiwan2012}, time series analysis \citep{Kane2014}, agriculture \citep{Grimm2008}, missing data \citep{Stekhoven2011}, genomics \citep{qi_2012} and pattern recognition \citep{rogez_al_2008}.
In recent years, neural networks and deep learning algorithms have attracted a lot of attention and have been shown to be effective in a wide range of applications involving mostly  unstructured data,  such as speech recognition, image reconstruction and text translation; see \cite{najafabadi2015deep} and the references therein for a review on the topic. However, to exhibit high levels of performance, deep learning algorithms typically require huge amounts of data \citep{najafabadi2015deep, arnould2020analyzing}.  This is seldom the case in surveys as most data sets consist of structured data consisting of (at most) a few
hundred thousand observations and a few hundred survey variables.  For an empirical comparison of RF and neural networks, see   \cite{han2018comparison}. Finally, unlike RF algorithms that require the specification of a small number of hyper-parameters (see Section \ref{hyper_param}), gradient boosting, Bayesian regression trees or deep learning approaches depend upon the complex choice of a large number of hyper-parameters \citep{bergstra2011algorithms}.

To the best of our knowledge, only little is known about the theoretical properties of RF  based on the original algorithm of \cite{breiman2001random}. Often, the theoretical investigations are made at the expense of simplifying assumptions; see for instance \cite{biau2008consistency} and \cite{biau2012analysis}. Two notable exceptions are \cite{wager2014asymptotic} and \cite{scornet2015consistency} who established the theoretical properties of an algorithm closely related to that of \cite{breiman2001random}. In a finite population setting, the theoretical properties of RF algorithms have yet to be established, even in the ideal situation of 100\% response. This paper aims to fill this important gap.  While we are mostly concerned with RF for regression, we can easily extend our methods to the case of RF for classification.   Some recent empirical studies on the performance of RF for complex survey data can be found in \cite{tipton2013properties}, \cite{buskirk2015finding}, \cite{de2018sample} and \cite{kern2019tree}.

The rest of the paper is organized as follows.  Regression trees and RF are presented in Section \ref{tree_rf_pop}. In Section \ref{section3}, we suggest two classes of model-assisted estimators based on random forests: the first is based on partitions built at the population level, while the second class is based on partitions built at the sample level. In Section \ref{section4}, we establish the theoretical properties of model-assisted estimators based on RF and derive corresponding variance estimators. In Section \ref{model_calib}, we describe a model-calibration procedure for handling multiple survey variables. In Sections \ref{simu1}-\ref{hyper_param}, the finite sample properties of the proposed point and variance estimation procedures  
are evaluated through a simulation study, and in Section \ref{real_data}, we apply the proposed methods using data on radio audiences collected by  M\'ediam\'etrie, a French audience company.  The paper ends with some final remarks in Section \ref{conclusion}. Proofs of major results and further technical details are relegated to the Appendix and the Supplementary Material.

\section{Regression trees and random forests}\label{tree_rf_pop}

\subsection{Regression trees}\label{regres_tree}
The original RF uses regression trees based on the classification and regression tree algorithm (CART) of \citet{Breiman_Friedman_Olshen_Stone1984}, whereby the partition of the predictor space is generated by a greedy recursive algorithm. In this paper, we focus on the CART algorithm for regression, designed for handling quantitative survey variables $Y$, but our methods 
also applies to the case of binary survey variables.
In this case, the interest lies in estimating the probability of the outcome being equal to one. With regression trees, these estimated probabilities always lie between $0$ and $1,$ which is a desirable feature. Alternative criteria may be used with binary variables, such as the Gini impurity or the entropy instead of the CART regression criterion \citep[Chapter 9]{hastie_tibshirani_friedman_2011}.  

The CART algorithm for regression searches for the splitting variable and the splitting position (i.e., the coordinates on the predictor space where to split) for which the difference in empirical variance in the node before and after splitting is maximized. As a starting point, we consider the hypothetical situation, where $y_k$ and $\mathbf{x}_k$ are observed  for all $k \in U$ and assume that the regression tree is fitted at the population level. We use the generic notation $A$  to denote a node with cardinality $\#(A)$ considered for the next split, and $\mathcal{C}_A$ to denote the set of possible splits in the node $A$, which corresponds to the set of all possible pairs $(j,z)=(\text{variable}, \text {position})$. This splitting process is performed by searching for the best split $(j^*,z^*)$ for which the following empirical CART population criterion is maximized: 
\begin{equation} \label{opt1}
	L_N(j,z) = \dfrac{1}{\#(A)} \sum_{k \in U} \mathds{1}_{ \mathbf{x}_k \in A} \left\{\left(y_k - \bar{y}_A\right)^2 -  \left(y_k -\bar{y}_{A_{L}} \mathds{1}_{ x_{kj} < z} -\bar{y}_{A_{R}}\mathds{1}_{ x_{kj} \geq z} \right)^2 \right\} ,
\end{equation}
where 
$A_{L} = \left\{ k \in A ; x_{jk} < z \right\}$, $A_{R} = \left\{ k\in A ; x_{jk} \geqslant z \right\}$
and $\bar{y}_A$ is the average of the $y$-values of units belonging to $A.$ The best cut is always performed in the middle of two consecutive data points. In practice, it is common to impose a minimal number of observations $N_0$ (say) in each terminal node. In this case, the splitting process is performed until an additional split generates a terminal node with fewer observations than $N_0$.


The splitting process leads to the set
\begin{equation}\label{partition}
	\mathcal{P}_U = \left\{A^{(U)}_1, \ldots, A^{(U)}_j, \ldots, A^{(U)}_{J_U}\right\}
\end{equation}  of $J_U$ hyperrectangles of $\mathbb{R}^p$ such that  $A_j^{(U)}\bigcap A_{j'}^{(U)} = \emptyset$, for all $j \neq j' \in \left\{1, 2, \ldots, J_U\right\}$ and $\displaystyle{\bigcup_{j=1}^{J_U} A^{(U)}_j} = \mathbb{R}^p$. Thus, the set $\mathcal{P}_U$ defines a partition of $\mathbb{R}^p$, whose elements are called the terminal nodes. We use the generic notation $A^{(U)}(\mathbf{x}_k)$ to denote a terminal node belonging to the partition $\mathcal P_U$ given in (\ref{partition}) and that contains $\bx_k.$

Figure 1 below illustrates how the recursive splitting procedure creates a partition in the simple case of two auxiliary variable,s $X_1$ and $X_2,$ based on 5 splits. Each grey rotated square represents a split (variable, position) performed at some position along one of the two auxiliary variables, $X_1$ or $X_2$. The white ellipses represent the $6$ terminal nodes, also represented by the scatter plot on the right; see also \cite{biau2014cellular} for a similar illustration.
\begin{figure}[h!]
	\centering
	\includegraphics[width=1.02\textwidth]{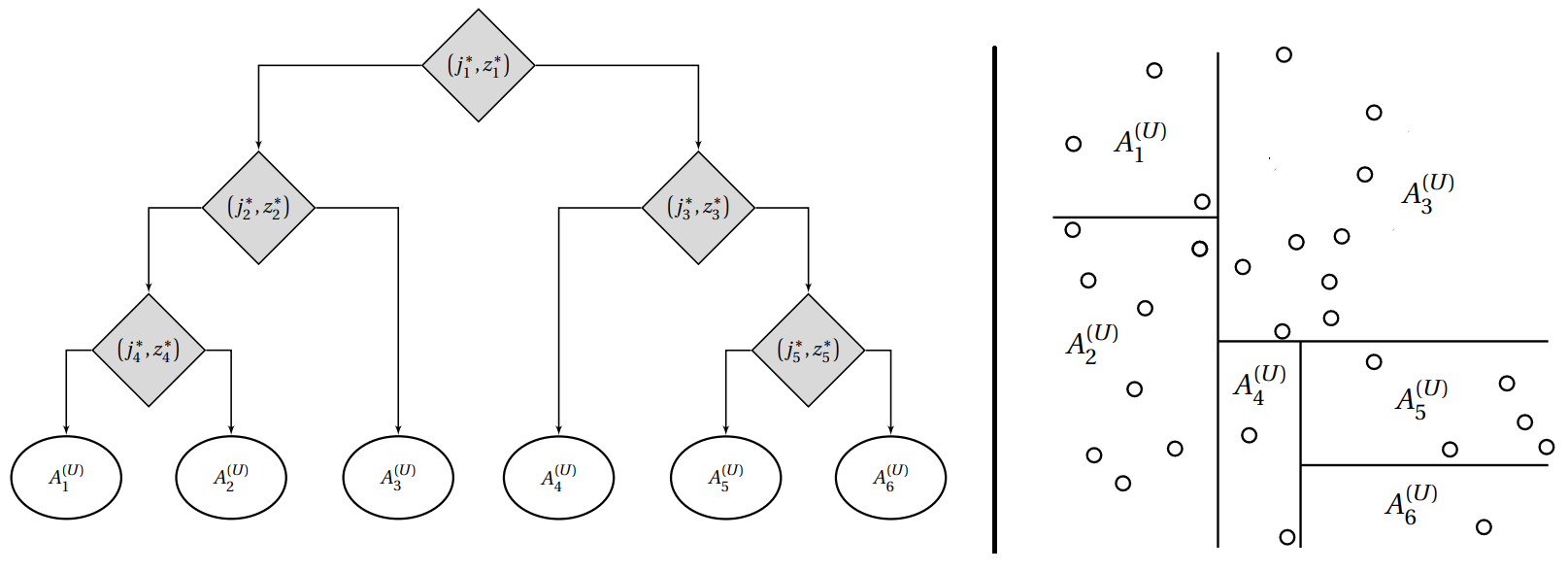}
	\caption{A regression tree (left) and the corresponding partition of $\mathbb{R}^2$ (right).}
\end{figure}


The prediction $\widetilde{m}_{tree}(\mathbf{x}_k)$ at the point $\bx_k$ is simply defined as the average of the  $y$-values of population individuals $\ell$ such that $\bx_\ell$ belongs to $A^{(U)}(\mathbf{x}_k)$: 
\begin{equation} \label{tree1}
\widetilde{m}_{tree}(\mathbf{x}_k) =\sum_{\ell \in U} \dfrac{\mathds{1}_{\bx_\ell \in  A^{(U)} \left(\mathbf{x}_k\right)}y_\ell}{\widetilde{N}(\bx_k)}, 
\end{equation}
where $\widetilde{N}(\bx_k) = \sum_{\ell \in U} \mathds{1}_{\bx_\ell\in A^{(U)}(\mathbf{x}_k)}$ denotes
 the number of units belonging to the terminal node $A^{(U)}(\mathbf{x}_k).$ 
 Given the partition $\mathcal{P}_U$, the population-level fit  $\widetilde{m}_{tree}(\mathbf{x}_k)$ may be viewed as the least squares type prediction obtained by fitting a one-way ANOVA model with $Y$ as the response variable and the node membership indicators $\{\mathds{1}_{\bx_k\in A_j^{(U)}}\}_{j=1}^{J_U}$ as the set of explanatory variables; see \citep[Chapter 9]{hastie_tibshirani_friedman_2011} and the Supplementary Material for more details. 

\subsection{Random forests}\label{rf}
To introduce RF in a finite population setting, we again assume that $y_k$ and $\mathbf{x}_k$ are observed  for all $k \in U$. 
RF are based on a (large) number $B$ (say) of regression trees. 
The prediction attached to unit $k$ is defined as the average of the predictions produced by each of the $B$ regression trees. That is, 
\begin{equation*}
\widetilde{m}_{rf}(\bx_k) = \dfrac{1}{B} \sum_{b = 1}^B \widetilde{m}^{(b)}_{tree}(\mathbf{x}_k),
\end{equation*}
where $\widetilde{m}^{(b)}_{tree}(\mathbf{x}_k)$ is the predicted value attached to unit $k$ obtained from the $b$th regression tree, $b=1, \ldots, B.$

Obviously, if $ \widetilde{m}^{(1)}_{tree}(\mathbf{x}_k) = \ldots =  \widetilde{m}^{(B)}_{tree}(\mathbf{x}_k)$, then $\widetilde{m}_{rf}(\bx_k) = \widetilde{m}^{(1)}_{tree}(\mathbf{x}_k) $. Such a situation would occur if each regression tree uses a deterministic
splitting criterion in (\ref{opt1}), which  would lead to $B$ identical partitions of $\mathbb{R}^p$. To cope with this issue, some amount of randomization is introduced in the tree building process,  leading to $B$ different predictions of $m(\cdot). $ 
The original algorithm of \cite{breiman2001random}  is implemented as follows:
\begin{enumerate}
	\item Select $B$ bootstrap data sets 
	with replacement from the population data set, $D_U = \left\{(\bx_k, y_k)\right\}_{k \in U}$, each data set containing $N$ pairs of the form $(\bx_k, y_k);$
	\item Fit a regression tree on each bootstrap data set. 
	Before each split is performed, $m_{try}$ predictors are selected randomly and without replacement from the full set of $p$ predictors. The $m_{try}$ selected predictors are the split candidates to be considered for searching the best split in (\ref{opt1}).
\end{enumerate} 
The algorithm stops when each terminal node contains less than a predetermined number of observations. This procedure leads to a set $\widetilde{\mathcal{P}}_U = \left\{\mathcal{P}_U^{(1)}, \mathcal{P}_U^{(2)}, \ldots , \mathcal{P}_U^{(B)}\right\}$ of $B$ different partitions of $\mathbb{R}^p$, each of the form  (\ref{partition}). The randomization used in the tree building process is denoted by the random variable $\theta^{(U)},$ assumed to  belong to some measurable space $(\Theta, \mathcal{F})$ and independent of the data \citep{biau2016random}. Let $\theta^{(U)}_b$ be the random variable associated with the $b$th tree. The random variables $\theta_b^{(U)}, b=1, \ldots, B,$ are assumed to be independent and their distribution is identical to that of the generic random variable $\theta^{(U)}$. In the RF algorithm of Breiman, the randomization is induced by the
selection (with replacement) of observations in Step 1 of the above algorithm and the random selection of split variables in Step 2 of the above algorithm.
A number of RF algorithms algorithms have been considered in the literature. 
%
 For example, \citep{biau2008consistency, scornet2016asymptotics}  considered a simple RF algorithm called the uniform random forest (URF) algorithm. In the URF algorithm,  a variable is selected with equal probability among the initial $p$ predictors at each node and a split position is chosen uniformly in the node along the direction of the selected variable. The algorithm stops when each terminal node has a predetermined number of cuts. In this case, the randomization $\theta_b^{(U)}$ is characterized by the random selections of the 
node,  the split variable and the location. For more details on RF algorithms, the reader is referred to \cite{geurts2006extremely}, \cite{biau2008consistency}, \cite{biau2012analysis}, \cite{genuer2012variance}, \cite{scornet2016asymptotics},  among others. 
In the sequel, unless stated otherwise, we assume that the observations in Step 1 of the above algorithm are selected without replacement \citep{scornet2017tuning}, which we will refer to as subsampling.
Also, for more generality, the splitting criterion is left unspecified.

 Let  $\widetilde{m}^{(1)}_{tree}(\cdot, \theta_1^{(U)}), \ldots, \widetilde{m}^{(B)}_{tree}(\cdot, \theta_B^{(U)}),$ denote the predictions obtained with the $B$ stochastic or randomized regression trees. The RF prediction attached to unit $k$ is defined as a bagged estimator of $B$ trees:
\begin{equation} \label{rf1}
\widetilde{m}_{rf}(\bx_k) = \dfrac{1}{B} \sum_{b = 1}^B \widetilde{m}^{(b)}_{tree}(\mathbf{x}_k, \theta_b^{(U)}).
\end{equation}
It is worth pointing out that considering a new set of predictors at each split leads to  $B$  trees which are less correlated with each other; that is, trees that are quite different from one another. As a result, the RF may lead to substantial gains in precision compared to a single tree  \citep [Chapter 8]{hastie_tibshirani_2015}. The number of predictors selected at each split, denoted by $m_{try},$ is thus an important tuning parameter in the RF algorithm. In practice, the choice $m_{try}=\sqrt{p}$ seems to give good results, in general. In Section \ref{hyper_param}, we assess the impact of $m_{try}$ through  a simulation study.

For any RF algorithm, the prediction at the point $\textbf{x}_k$  in (\ref{rf1}) can also be expressed as
\begin{equation} \label{localAve0}
\widetilde{m}_{rf} (\bx_k) = \sum_{ \ell \in U} \widetilde{W}_{\ell} (\bx_k) y_\ell,
\end{equation}
where
\begin{equation} \label{popWeights0}
\widetilde{W}_{\ell} (\bx_k ) = \dfrac{1}{B} \sum_{b=1}^B \dfrac{\psi_\ell^{(b, U)} \mathds{1}_{\bx_\ell\in A^{(U)}\left(\boldsymbol{\mathbf{x}_k} , \theta_b^{(U)}\right)} }{\widetilde{N}(\bx_k, \theta_b^{(U)})  }
\end{equation}
is a prediction weight attached to unit $k$ with $\widetilde{N}(\bx_k, \theta_b^{(U)}) = \sum_{ \ell \in U}\psi_\ell^{(b, U)} \mathds{1}_{ \bx_\ell\in A^{(U)}(\boldsymbol{\mathbf{x}_k} , \theta_b^{(U)})}$ denoting the number of observations belonging to the terminal node $A^{(U)}$ containing $\bx_k$ in the $b$th regression tree. Note that the prediction $\widetilde{m}_{rf}$ in (\ref{localAve0}) can be computed for either a continuous or a categorical  $y$-variable. In the latter case, the prediction $\widetilde{m}_{rf}$ in (\ref{localAve0}) corresponds to the population proportion of units who belong to a given category computed over the $B$ trees. The random variables $\psi_\ell^{(b, U)}$ in (\ref{popWeights0}) depend on the resampling mechanism used in the RF algorithm and depend on $\theta_b^{(U)}$, but are independent of the sampling design $\mathcal{P}(S\mid \mathbf{Z}_U)$. In the case of subsampling, the random variables $\psi_\ell^{(b, U)}$ follow a Bernoulli distribution, $\psi_\ell^{(b, U)} \sim \mathcal{B} \left({N'} /N\right)$, where ${N'}$ denotes the number of units in each subsample. 

\begin{proposition}\label{propr_weight_popRF} Consider the predictor weights $\widetilde{W}_{\ell} (\bx_k )$ given in (\ref{popWeights0}). 
\begin{enumerate}[i)]	
\item The weights $\widetilde{W}_{\ell} (\bx_k )$  are uniformly bounded. That is, 
\begin{equation*}
0 < \widetilde{W}_{\ell} (\bx_k ) \leqslant cN_0^{-1} 
\end{equation*}
for all  $\ell \in U$ and all $\bx_k \in \mathbb{R}^p$, where $c$ is a positive constant that does not depend either on $k$,$\ell,$ or $N_0,$ the minimal number of observations in the terminal nodes.
\item The weight functions sum up to one; that is, $\sum_{\ell \in U} \widetilde{W}_{\ell} (\bx_k ) = 1$ for all $\bx_k \in \mathbb{R}^p$.
\end{enumerate}
\end{proposition}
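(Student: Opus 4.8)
The plan is to establish (ii) first, since it is purely algebraic, and then turn to (i). For (ii), I would interchange the two finite summations in $\sum_{\ell \in U}\widetilde{W}_{\ell}(\bx_k)$, which is legitimate because both index sets are finite. After the interchange the sum over $\ell$ acts only on $\psi_\ell^{(b,U)}\mathds{1}_{\bx_\ell \in A^{(U)}(\bx_k,\theta_b^{(U)})}$, and by the very definition of $\widetilde{N}(\bx_k,\theta_b^{(U)})$ given just below (\ref{popWeights0}) this inner sum equals $\widetilde{N}(\bx_k,\theta_b^{(U)})$. Each of the $B$ terms thus reduces to $\widetilde{N}(\bx_k,\theta_b^{(U)})/\widetilde{N}(\bx_k,\theta_b^{(U)})=1$, and averaging over $b$ yields $1$. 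The only subtlety is that the denominator must be nonzero; this holds because $\bx_k$ lies in exactly one terminal node of the $b$th tree and that node is nonempty on the subsample, so $\widetilde{N}(\bx_k,\theta_b^{(U)})\geq 1$. This is the same fact that makes $\widetilde{W}_{\ell}(\bx_k)$ well defined to begin with.

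For the upper bound in (i), the key observation is the stopping rule of the algorithm: splitting continues until any further split would create a node with fewer than $N_0$ observations, so every terminal node of every tree contains at least $N_0$ of the subsampled units. In the notation of (\ref{popWeights0}) this reads $\widetilde{N}(\bx_k,\theta_b^{(U)})\geq N_0$ for all $b$ and all $\bx_k\in\mathbb{R}^p$. Since $\psi_\ell^{(b,U)}\in\{0,1\}$ and the indicator is at most $1$, each summand in (\ref{popWeights0}) is bounded above by $\widetilde{N}(\bx_k,\theta_b^{(U)})^{-1}\leq N_0^{-1}$; averaging over the $B$ trees then gives $\widetilde{W}_{\ell}(\bx_k)\leq N_0^{-1}$, which is the asserted bound with $c=1$ (the generic constant $c$, independent of $k$, $\ell$ and $N_0$, is kept for compatibility with stopping rules other than a fixed minimal node size). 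The lower bound is immediate, since every summand in (\ref{popWeights0}) is non-negative and hence $\widetilde{W}_{\ell}(\bx_k)\geq 0$, with strict positivity occurring precisely when $\ell$ is drawn into the subsample and falls into the terminal node of $\bx_k$ in at least one of the $B$ trees.

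I do not anticipate a genuine obstacle: the whole argument is a verification. The one place that requires care is the bookkeeping distinction between the subsample count $\widetilde{N}(\bx_k,\theta_b^{(U)})=\sum_{\ell\in U}\psi_\ell^{(b,U)}\mathds{1}_{\bx_\ell\in A^{(U)}(\bx_k,\theta_b^{(U)})}$ and the full population count $\widetilde{N}(\bx_k)$ appearing in (\ref{tree1}). The inequality $\widetilde{N}(\bx_k,\theta_b^{(U)})\geq N_0$ must be read on the subsample on which the $b$th tree is grown, because that is the set on which the minimal-node-size constraint is enforced; both well-definedness ($\widetilde{N}>0$) and the bound ($\widetilde{N}\geq N_0$) follow from this reading of the stopping rule.
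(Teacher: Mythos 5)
Your proof is correct and follows essentially the same route as the paper's: the bound in (i) comes from $\psi_\ell^{(b,U)}\in\{0,1\}$ together with the terminal-node size constraint giving $\widetilde{N}(\bx_k,\theta_b^{(U)})^{-1}\leqslant c N_0^{-1}$, and (ii) is the same counting argument, which the paper phrases as $\widetilde{W}_{\ell}(\bx_k)$ equalling $\widetilde{N}(\bx_k,\theta_b^{(U)})^{-1}$ exactly $\widetilde{N}(\bx_k,\theta_b^{(U)})$ times while you phrase it as an interchange of the finite sums. If anything, your write-up is slightly more careful than the paper's, since you make explicit that the stopping rule (hence $\widetilde{N}\geqslant N_0$) is enforced on the subsample on which each tree is grown, that one may take $c=1$ in this reading, and that the lower bound is really nonnegativity rather than strict positivity for every $\ell\in U$.
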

The proof of Proposition \ref{propr_weight_popRF} is given in the Appendix.

\section{Model-assisted estimation: Random forests}\label{section3}

In Section \ref{tree_rf_pop}, we assumed that $y_k$ and $\mathbf{x}_k$ were observed for all $k\in U$, which led to the population-level fits $\widetilde{m}_{tree}(\mathbf{x}_k)$ and $\widetilde{m}_{rf}(\mathbf{x}_k)$ given by (\ref{tree1}) and (\ref{localAve0}), respectively. However, both (\ref{tree1}) and (\ref{localAve0}) cannot be computed in practice as the $y$-values are observed only for $k\in S.$ Moreover, the regression trees in Sections \ref{regres_tree} and \ref{rf} were based on partitions built recursively at the population level  so as to optimize the population criterion (\ref{opt1}). As a result, these  partitions depend on the vector of predictors $\{\mathbf{x}_k\}_{ k\in U}$ but also on the unknown population values $\{y_k\}_{k \in U}.$
While the former type of dependency is inherent to most parametric and nonparametric procedures, the latter is absent in many commonly used parametric and nonparametric procedures such as spline procedures \citep{breidt_opsomer_2000, goga_2005, goga_ruiz-gazen, breidt_claeskens_opsomer_2005, mcconville_breidt_2013}. Due to the dependency on the unknown population values $\{y_k\}_{k \in U},$  establishing the theoretical properties of model-assisted estimators based on RF is more challenging. 

For these reasons, in Section \ref{Sec_ma}, we start by considering the simpler case of population partitions obtained using a variable $Y^*,$ assumed to be closely related to $Y$ and available for all $k \in U$. While this assumption is somehow strong and not tenable in many practical situations, it provides some insights on how to tackle the problem in the presence of $Y$-dependency.  Algorithms allowing to get rid of the $Y$-dependency have been suggested in the random-forest literature; see e.g. \cite{biau2008consistency}, \cite{biau2012analysis} or \citet [Chap. 20]{Devroye_Gyorfi_Lugosi}. Sample-based partitions are considered in Section \ref{sample-based_partitions}.

\subsection{Model-assisted estimation: Population-based partitions}\label{Sec_ma}
In this section, we consider the case of a splitting criterion that does not depend on the data $\{y_k\}_{k \in s}$. We consider a variable $Y^*$ assumed to be closely related to $Y$ and such that the values $y_k^*$ are available for all $k \in U$. We seek population partitions $\widetilde{\mathcal{P}}^*_U,$ independent of the survey variable $Y,$ that maximize  the following criterion:
\begin{eqnarray} \label{opt2}
L^*_N(j,z)=\small{\dfrac{1}{\#(A)} \sum_{k \in U} \mathds{1}_{ \mathbf{x}_k \in A} \left\{\left(y^*_k - \overline{y}^*_A\right)^2 -  \left(y^*_k -\overline{y}^*_{A_{L}} \mathds{1}_{ x_{kj} < z} -\overline{y}^*_{A_{R}}\mathds{1}_{ x_{kj} \geq z} \right)^2 \right\}},
\end{eqnarray}
where  $A_{R},$ $A_{L}$ are as in (\ref{opt1})  and $\overline{y^*}_A$ is the average of the $y^*$-values for the units belonging to a node $A.$  
Based on  (\ref{opt2}), the population-level fit at the point $\textbf{x}_k$ is given by
\begin{equation} \label{localAve_star}
\widetilde{m}^*_{rf} (\bx_k) = \sum_{ \ell \in U} \widetilde{W}_{\ell}^* (\bx_k) y_\ell,
\end{equation}
where the weights $\widetilde{W}_{\ell}^* (\bx_k)$ in (\ref{localAve_star}) are obtained from   (\ref{popWeights0})  by replacing $A^{(U)}$ with $A^{*(U)},$ a generic member of the partition $\widetilde{\mathcal{P}}^*_U.$ 

The weights $\{\widetilde{W}_{\ell}^* (\cdot)\}_{\ell\in U}$ in (\ref{localAve_star}) are known for all $\ell\in U$ and are independent of $Y$. Since $\widetilde{m}^*_{rf} (\bx_k)$ in (\ref{localAve_star}) requires the $y$-values for all the population units, it cannot be computed. A simple solution consists of replacing the population total on the right hand-side of (\ref{localAve_star}) by its corresponding Horvitz--Thompson estimator, which leads to
\begin{eqnarray} 
\widehat{m}_{rf}^*(\mathbf{x}_k)=\sum_{\ell\in S}\frac{\widetilde{W}^*_{\ell} (\bx_k ) y_\ell}{\pi_\ell}. \label{est_m_popRF}
\end{eqnarray}
A model-assisted estimator of $t_y$ based on population RF is obtained  by plugging $\widehat{m}_{rf}^*(\mathbf{x}_k)$ in  (\ref{dif}): 
\begin{eqnarray}\label{trf1}
\widehat{t}_{rf}^*=\sum_{k\in U}\widehat{m}_{rf}^*(\mathbf{x}_k)+\sum_{k\in S}\frac{y_k-\widehat{m}_{rf}^*(\mathbf{x}_k)}{\pi_k}.
\end{eqnarray}

\begin{proposition}\label{trf_poids_ech}
	The RF estimator given in (\ref{trf1}) can be expressed as 
	\begin{eqnarray*}  \label{p5_pop}
		\widehat{t}_{rf}^* = \sum_{k \in S} w_{ks} y_k,
	\end{eqnarray*}
	where the weights $w_{ks}$ are given by 
	\begin{equation}\label{weight_rf_pop}
	w_{ks} =\frac{1}{\pi_k}\left\{ 1+ \sum_{\ell \in U} \widetilde{W}_{k}^* \left(\mathbf{x}_{\ell}\right) \left(1 - \frac{I_\ell}{\pi_\ell}\right)\right\}, \quad k\in S
	\end{equation}
and	
$$
\sum_{k\in S}w_{ks}=1.
$$	
\end{proposition}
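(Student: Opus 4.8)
The plan is to prove the two claims separately, both being purely algebraic once the prediction $\widehat{m}_{rf}^*$ is written in the local-averaging form (\ref{est_m_popRF}). First I would substitute that definition into the three pieces of $\widehat{t}_{rf}^*$ in (\ref{trf1}), obtaining
\[
\widehat{t}_{rf}^* = \sum_{k\in U}\sum_{\ell\in S}\frac{\widetilde{W}_\ell^*(\bx_k)\,y_\ell}{\pi_\ell} + \sum_{k\in S}\frac{y_k}{\pi_k} - \sum_{k\in S}\frac{1}{\pi_k}\sum_{\ell\in S}\frac{\widetilde{W}_\ell^*(\bx_k)\,y_\ell}{\pi_\ell}.
\]
The next step is to interchange the order of summation in the first and third terms so as to factor out $y_\ell/\pi_\ell$, and to rewrite each sample sum $\sum_{k\in S}(\cdot)$ as a population sum $\sum_{k\in U}I_k(\cdot)$ via the selection indicators. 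After relabelling the dummy indices $\ell\leftrightarrow k$, the three terms collapse into $\sum_{k\in S}(y_k/\pi_k)\bigl\{1+\sum_{\ell\in U}\widetilde{W}_k^*(\bx_\ell)(1-I_\ell/\pi_\ell)\bigr\}$, which is precisely $\sum_{k\in S}w_{ks}y_k$ with $w_{ks}$ as in (\ref{weight_rf_pop}). The only care needed here is the bookkeeping of the two indices, namely distinguishing the \emph{source} unit carrying the $y$-value from the \emph{query} unit at which the tree is evaluated, since these play asymmetric roles in $\widetilde{W}_\ell^*(\bx_k)$.

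For the second claim I would exploit the fact that, by the representation just obtained, $\sum_{k\in S}w_{ks}$ equals the value of $\widehat{t}_{rf}^*$ when $y_k\equiv 1$. I would therefore set $y_k=1$ in (\ref{trf1}), so that $\widehat{m}_{rf}^*(\bx_k)=\sum_{\ell\in S}\widetilde{W}_\ell^*(\bx_k)/\pi_\ell$, and then simplify. The key ingredient is Proposition \ref{propr_weight_popRF}(ii): for every fixed query point the population weights sum to one, $\sum_{\ell\in U}\widetilde{W}_\ell^*(\bx_k)=1$. Interchanging summation and invoking this normalization is what cancels the correction term against the Horvitz--Thompson term.

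I expect the normalization step to be the main obstacle. Concretely, after interchanging summation one is left with inner factors of the form $\sum_{k\in S}\widetilde{W}_k^*(\bx_\ell)/\pi_k$, and the delicate point is to verify that the sum-to-one identity of Proposition \ref{propr_weight_popRF} is applied to the correct index — the source index summed over $U$, not the evaluation index — and to control how the Horvitz--Thompson weighting interacts with that identity so that the population normalization transfers to the sample-weighted sum. Once this transfer is justified, the two double sums cancel and $\sum_{k\in S}w_{ks}$ reduces to the stated constant; I would therefore devote the bulk of the write-up to this normalization argument, keeping the index relabellings of the first part as routine.
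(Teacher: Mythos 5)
Your first claim is handled exactly as in the paper: the paper's entire proof of Proposition \ref{trf_poids_ech} consists of substituting (\ref{est_m_popRF}) into (\ref{trf1}), interchanging the order of summation and relabelling the source and query indices, which is precisely your first step. That part of your proposal is correct and complete.

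The second part, however, contains a genuine gap, and it sits exactly at the point you flag as ``delicate.'' The normalization transfer you hope to justify is false. Proposition \ref{propr_weight_popRF}(ii) states $\sum_{\ell\in U}\widetilde{W}^*_\ell(\bx_k)=1$: the sum runs over the \emph{source} index and over the \emph{whole population}. After your interchange you instead face $\sum_{k\in S}\widetilde{W}^*_k(\bx_\ell)/\pi_k$, which is the Horvitz--Thompson \emph{estimator} of that unit total --- a random variable that is not identically $1$ on every sample, so no algebraic cancellation occurs. Contrast this with the sample-based weights of Section \ref{sample-based_partitions}: there $\sum_{k\in S}\widehat{W}_k(\bx_\ell)\pi_k^{-1}=1$ holds exactly, because $\widehat{N}(\bx,\theta_b^{(S)})$ is itself defined as the HT-weighted node count, and this is what lets the paper assert $\sum_{k\in S}w'_{ks}=N$ for those weights. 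That mechanism is unavailable for $\widetilde{W}^*$, whose denominator $\widetilde{N}$ is the unweighted population count. Your own device of setting $y_k\equiv 1$ makes the failure visible: it shows $\sum_{k\in S}w_{ks}=\widehat{t}^{\,*}_{rf}$ evaluated at $y\equiv 1$, i.e.\ a design-consistent estimator of $N$, which cannot be identically equal to $1$ (and is not identically equal to $N$ either). Concretely, take $B=1$, no subsampling and a single-node tree, so $\widetilde{W}^*_k(\bx_\ell)=1/N$ for all $k,\ell$; then $\sum_{k\in S}w_{ks}=\widehat{N}\left(2-\widehat{N}/N\right)$ with $\widehat{N}=\sum_{k\in S}\pi_k^{-1}$, which equals $N$ under simple random sampling without replacement and is random under Poisson sampling. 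You should also be aware that the paper's own proof stops after the rearrangement and never establishes the displayed identity $\sum_{k\in S}w_{ks}=1$; as the computation above shows, that identity appears to be a misstatement (the exact sum property proved in the paper is $\sum_{k\in S}w'_{ks}=N$ for the sample-based weights), so no amount of care in your ``transfer'' step would have rescued it.
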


\begin{proof}
	\small{
		By rearranging the sums, we get:
		\begin{align*}
		\widehat{t}_{rf}^* &= \sum_{k\in S}\frac{y_k}{\pi_k}+\sum_{\ell\in U}\left(1-\frac{I_\ell}{\pi_\ell}\right)\widehat{m}_{rf}^*(\mathbf{x}_\ell)= \sum_{k\in S}\frac{y_k}{\pi_k}+\sum_{\ell\in U}\left(1-\frac{I_\ell}{\pi_\ell}\right)\left(\sum_{k\in S}\widetilde{W}_{k}^* \left(\mathbf{x}_\ell\right)\frac{y_k}{\pi_k}\right)\\
		&= \sum_{k\in S}\left\{1+\sum_{\ell\in U}\left(1-\frac{I_\ell}{\pi_\ell} \right)\widetilde{W}_{k}^* \left(\mathbf{x}_\ell\right)\right\}\frac{y_k}{\pi_k}.
		\end{align*}
}
\end{proof}
Since the partitions $\widetilde{\mathcal{P}}^*_U,$ are independent of both the survey variable $Y$ and the sample $S$, the weights $w_{ks}$ given by (\ref{weight_rf_pop}) depend on the sample only through the sample selection indicators $I_\ell, \ell \in U,$ but are independent of $Y$. As a result, these weights may be used to estimate the population total of any survey variable, which is an attractive feature in multipurpose surveys.  However,  for RF algorithms based on the splitting criterion in (\ref{opt2}), we expect the weights $w_{ks}$  to be efficient whenever the survey variable $Y$ is highly correlated to the variable $Y^*$. In multipurpose surveys where the survey variables are not necessarily correlated with one another, it may be preferable to use a splitting criterion that depends on the data $\{\bx_k\}_{k \in U}$ as done in quantile random forests \citep{Devroye_Gyorfi_Lugosi, scornet2016asymptotics}.


\subsection{Model-assisted estimation: Sample-based partitions}\label{sample-based_partitions}
In this section, we seek sample partitions $\widehat{\mathcal{P}}_S = \left\{\widehat{\mathcal{P}}_S^{(1)}, \ldots, \widehat{\mathcal{P}}_S^{(b)}, \ldots,  \widehat{\mathcal{P}}_S^{(B)}\right\}$  using 
the following sample-based criterion:
\begin{eqnarray}
L_n(j,z) &= &\dfrac{1}{\#(A)} \sum_{k \in S} \mathds{1}_{ \mathbf{x}_k \in A} \left\{\left(y_k - \bar{y}_A\right)^2 -  \left(y_k -\bar{y}_{A_{L}} \mathds{1}_{ x_{kj} < z} -\bar{y}_{A_{R}}\mathds{1}_{ x_{kj} \geq z} \right)^2 \right\}.
\label{opt3}
\end{eqnarray}
Based on the partition  $\widehat{\mathcal{P}}_S$, we obtain the sample-level fits 
\begin{equation} \label{samplePred}
\widehat{m}_{rf} (\bx_k) = \sum_{ \ell\in S} \frac{\widehat{W}_{\ell} (\bx_k) y_\ell}{\pi_\ell},
\end{equation}
where 
\begin{equation}
\widehat{W}_{\ell} (\bx_k ) = \dfrac{1}{B} \sum_{b=1}^B \dfrac{ \psi_{\ell}^{(b, S)} \mathds{1}_{ \bx_{\ell}\in A^{(S)} \left(\boldsymbol{\mathbf{x}_k} , \theta_b^{(S)} \right)}}{ \widehat{N}(\bx_k, \theta_b^{(S)}) },\quad \ell\in S,\label{echWeights}
\end{equation}
and $\widehat{N}(\bx_k, \theta_b^{(S)})  = \sum_{\ell\in U} I_\ell \pi_\ell^{-1} \psi_\ell^{(b, S)} \mathds{1}_{\bx_{\ell}\in  A^{(S)} \left(\mathbf{x}_{k} , \theta_b^{(S)}\right)}$ denotes the estimated number of observations in the terminal node $ A^{(S)}$ containing $\bx_k$ in the $b$th regression tree.
The variable $ \psi_\ell^{(b, S)}$ indicates whether or not unit $\ell$ has been selected in the $b$th sub-sample  and is such that 
$\psi_\ell^{(b, S)} \sim \mathcal{B} \left({n'}/n\right)$ for RF based on subsampling, where $n'$ denotes the number of units in each sub-sample. 
 
Plugging $\widehat{m}_{rf}(\cdot)$ in (\ref{ma}) leads to the RF model-assisted estimator \begin{equation}\label{trf_sample}
	\widehat{t}_{rf} = \sum_{ k \in U} \widehat{m}_{rf} (\mathbf{x}_k) + \sum_{k \in S} \dfrac{y_k - \widehat{m}_{rf} (\mathbf{x}_k)}{\pi_k}.
\end{equation}
Using similar arguments to those used in the proof of Proposition \ref{trf_poids_ech}, we can show that $\widehat{t}_{rf}$  can be expressed as 
\begin{equation*} \label{est_w_ver}
\widehat{t}_{rf} = \sum_{k\in S} w'_{ks} y_k,
\end{equation*}
where the weights $w'_{ks}$ are given by 
\begin{equation}\label{weight_rf_ech}
w'_{ks} =\frac{1}{\pi_k}\left\{ 1+ \sum_{\ell \in U} \widehat{W}_{k} \left(\mathbf{x}_\ell\right) \left(1 - \frac{I_\ell}{\pi_\ell}\right)\right\}, \quad k\in S.
\end{equation}

Noting that $\sum_{k \in S}\widehat{W}_{k} \left(\mathbf{x}_\ell\right)\pi_k^{-1}=1$ for all $\ell\in U$, it follows from (\ref{weight_rf_ech}) that $\sum_{k\in S} w'_{ks}=N$ for every sample $S$. That is, the sum of  the weights $w'_{ks}$ match the population size $N$  perfectly, a desirable property shared by other nonparametric model-assisted estimators \citep{goga_2005, goga_ruiz-gazen, breidt_claeskens_opsomer_2005}. 
Unlike the weights $w_{ks}$ in (\ref{weight_rf_pop}), the weights $w'_{ks}$  depend on both the sample selection indicators $I_\ell, \ell \in U,$ and the partition $\widehat{\mathcal{P}}_S$ that varies from one sample to another. This is due to the fact that the nodes $ A^{(S)}$ are  constructed so as to optimize the sample criterion (\ref{opt3}). For this reason, the weights $w'_{ks}, k\in S,$  are variable specific in the sense that depend on the survey variable $Y$. To cope with this issue, we describe a model calibration procedure in Section \ref{model_calib} for handling multiple survey variables while producing a single set of weights.


\begin{remark}
In practice,  the variables $\psi_k^{(b, S)}$ in (\ref{echWeights}) are not generated for the units outside the sample. However, at least conceptually, nothing  precludes defining these variables for $k \in U\setminus S$. For $k \in U\setminus S,$ we set $\psi_k^{(b, S)}\sim \mathcal{B} \left( (N' -n')/(N - n)\right)$ so that
 $\sum_{k \in U} \psi_k^{(b, S)}= N'.$  Defining the variables $\psi_k^{(b, S)}$ for units outside the sample will have no effect on the predictions  $\widehat{m}_{rf}(\cdot)$ associated with the sample units since $I_k = 0$ for  $k \in U \setminus S$. This construction will prove useful in establishing the asymptotic properties of the proposed procedures; see Section \ref{section4}. 
\end{remark}
%
%
%
%

 
As for the RF prediction built at the population level described in Section \ref{rf}, the prediction $\widehat{m}_{rf}(\mathbf{x}_k)$ in (\ref{samplePred}) can be expressed as a bagged predictor \citep{hastie_tibshirani_friedman_2011}. That is,
$$
\widehat{m}_{rf}(\mathbf{x}_k)=\dfrac{1}{B} \sum_{b = 1}^B \widehat{m}^{(b)}_{tree}(\mathbf{x}_k, \theta_b^{(S)}),
$$
where $\widehat{m}^{(b)}_{tree}(\mathbf{x}_k, \theta_b^{(S)})=\sum_{\ell\in S} \psi_{\ell}^{(b, S)} \mathds{1}_{ \bx_{\ell}\in A^{(S)} \left(\boldsymbol{\mathbf{x}_k} , \theta_b^{(S)} \right)}y_{\ell}/\widehat{N}(\bx_k, \theta_b^{(S)}) $ is the prediction  associated with unit $k$ based on the $b$th  stochastic regression tree.
The model-assisted estimator $\widehat{t}_{rf}$ given by (\ref{trf_sample}) can thus be viewed as a bagged estimator:
\begin{equation*}
\widehat{t}_{rf}= \dfrac{1}{B} \sum_{b = 1}^B \widehat{t}_{tree}^{(b)}( \theta_b^{(S)}),
\end{equation*}
where 
$$\widehat{t}_{tree}^{(b)}( \theta_b^{(S)})=\displaystyle \sum_{k\in U}\widehat{m}^{(b)}_{tree}(\mathbf{x}_k, \theta_b^{(S)})+\sum_{k\in S}\frac{y_k-\widehat{m}^{(b)}_{tree}(\mathbf{x}_k, \theta_b^{(S)})}{\pi_k}$$
is the model-assisted estimator of $t_y$ based on the $b$th  stochastic regression tree. As in the case of regression trees built at the population level (see Section \ref{regres_tree}), given the partition $\widehat{\mathcal P}^{(b)}_S=\{A_j^{(bS)}\}_{j=1}^{J_{bS}},$ 
the predictions $\widehat{m}^{(b)}_{tree}(\mathbf{x}_k, \theta_b^{(S)})$ are least squares type predictions obtained by fitting the one-way ANOVA model with $Y$ as the response and the node membership indicators $\{\mathds{1}_{\bx_k\in A_j^{(bS)}}\}_{j=1}^{J_{bS}}$ as the explanatory variables; see the proof of Proposition \ref{prop_projection} and the Supplementary Material for more details. As a result, the estimator $\widehat{t}_{tree}^{(b)}( \theta_b^{(S)})$ is related to the customary post-stratified estimator \citep{SarndalLivre}. 
%

Under mild assumptions, Proposition \ref{bagging} below shows that bagging improves the efficiency of model-assisted estimators. This is similar to what is encountered in the classical RF literature \citep{hastie_tibshirani_friedman_2011}.
\begin{proposition}\label{bagging}
	Let $\widehat{t}^{(1)}, \ldots, \widehat{t}^{(b)}, \ldots,  \widehat{t}^{(B)}$ be a sequence of model-assisted estimators of $t_y$ and let $\widehat{t} = B^{-1} \sum_{b=1}^B \widehat{t}^{(b)}$ be a bagged estimator. Assuming that the $\widehat{t}^{(b)}$'s have approximately the same design bias and design variance, then,  for $B$ large enough:
	\begin{equation*} \label{bound}
	\text{MSE}_p(\widehat{t}) - \text{MSE}_p(\widehat{t}^{(1)}) \leqslant \V_p(\widehat{t}^{(1)}) \left(   \max_{b \neq b'}\left|\mathbb{C}or_p \left(\widehat{t}^{(b)}, \widehat{t}^{(b')} \right) \right|- 1\right) \leqslant 0,
	\end{equation*}
	where $\text{MSE}_p(\cdot)$ and $\mathbb{C}or_p(\cdot)$  denote the mean squared error and correlation operators with respect to the sampling design. 
\end{proposition}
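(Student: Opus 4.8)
The plan is to reduce the comparison of mean squared errors to a comparison of design variances, and then to exploit the standard variance-of-an-average decomposition together with the elementary fact that an average of correlations cannot exceed the largest correlation in absolute value.

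First I would use the decomposition $\text{MSE}_p(\widehat{t}) = \V_p(\widehat{t}) + \{\text{Bias}_p(\widehat{t})\}^2$ for both $\widehat{t}$ and $\widehat{t}^{(1)}$. Since $\widehat{t} = B^{-1}\sum_{b=1}^B \widehat{t}^{(b)}$, linearity of the design expectation gives $\text{Bias}_p(\widehat{t}) = B^{-1}\sum_{b=1}^B \text{Bias}_p(\widehat{t}^{(b)})$, and the hypothesis that all the $\widehat{t}^{(b)}$ share (approximately) the same design bias forces $\text{Bias}_p(\widehat{t}) \approx \text{Bias}_p(\widehat{t}^{(1)})$. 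The squared-bias terms then cancel in the difference, so the problem reduces to bounding $\V_p(\widehat{t}) - \V_p(\widehat{t}^{(1)})$.

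Next I would expand the variance of the average,
\begin{equation*}
\V_p(\widehat{t}) = \frac{1}{B^2}\left\{\sum_{b=1}^B \V_p(\widehat{t}^{(b)}) + \sum_{b \neq b'} \Cov_p(\widehat{t}^{(b)}, \widehat{t}^{(b')})\right\},
\end{equation*}
and rewrite each covariance as $\Cov_p(\widehat{t}^{(b)}, \widehat{t}^{(b')}) = \mathbb{C}or_p(\widehat{t}^{(b)}, \widehat{t}^{(b')})\,\{\V_p(\widehat{t}^{(b)})\V_p(\widehat{t}^{(b')})\}^{1/2}$. Invoking the equal-variance assumption $\V_p(\widehat{t}^{(b)}) = \V_p(\widehat{t}^{(1)})$ for all $b$, the geometric mean collapses to $\V_p(\widehat{t}^{(1)})$, and each of the $B(B-1)$ off-diagonal terms is bounded by $\max_{b\neq b'}|\mathbb{C}or_p(\widehat{t}^{(b)}, \widehat{t}^{(b')})|\,\V_p(\widehat{t}^{(1)})$. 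This gives
\begin{equation*}
\V_p(\widehat{t}) \leqslant \frac{\V_p(\widehat{t}^{(1)})}{B} + \frac{B-1}{B}\,\max_{b\neq b'}\left|\mathbb{C}or_p(\widehat{t}^{(b)}, \widehat{t}^{(b')})\right|\,\V_p(\widehat{t}^{(1)}),
\end{equation*}
and subtracting $\V_p(\widehat{t}^{(1)})$ collapses the right-hand side to $\frac{B-1}{B}\V_p(\widehat{t}^{(1)})\left(\max_{b\neq b'}|\mathbb{C}or_p| - 1\right)$.

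Finally, letting $B$ grow so that $(B-1)/B \to 1$ yields the stated middle bound $\V_p(\widehat{t}^{(1)})\left(\max_{b\neq b'}|\mathbb{C}or_p| - 1\right)$, and the concluding inequality $\leqslant 0$ is immediate since any correlation is bounded by one in absolute value. I expect the main delicate point to be the honest reading of the phrase \emph{for $B$ large enough}: for finite $B$ the factor $(B-1)/B$ is strictly less than one, and because $\max|\mathbb{C}or_p| - 1 \leqslant 0$ this pushes the exact finite-$B$ bound slightly \emph{above} the stated right-hand side, so the displayed inequality should be understood as holding in the limit, up to an $O(B^{-1})$ term that vanishes as $B \to \infty$. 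A secondary point to state carefully is that the \emph{approximately equal} bias and variance hypotheses are precisely what allow the squared-bias terms to cancel and the diagonal variances to be replaced by the common value $\V_p(\widehat{t}^{(1)})$.
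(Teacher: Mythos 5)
Your proof is correct and follows essentially the same route as the paper's: expand $\V_p(\widehat{t})$ into diagonal and correlation-weighted off-diagonal terms, invoke the common-variance assumption to bound the cross terms by $\max_{b\neq b'}\left|\mathbb{C}or_p(\widehat{t}^{(b)},\widehat{t}^{(b')})\right|\V_p(\widehat{t}^{(1)})$, cancel the squared biases, and absorb the remaining $O(B^{-1})$ term via the ``$B$ large enough'' clause. If anything, your bookkeeping is slightly more careful than the paper's, since you retain the factor $(B-1)/B$ and note explicitly that the exact finite-$B$ bound sits an $O(B^{-1})$ amount above the stated one, whereas the paper simply drops $\V_p(\widehat{t}^{(1)})/B$ at the final step.
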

The proof of Proposition \ref{bagging} is given in the Appendix. We end this section by giving 
an alternative expression for $\widehat{t}_{rf}.$
\begin{proposition} \label{prop_projection}
The RF estimator $\widehat{t}_{rf}$  given by (\ref{trf_sample}) can be written as
\begin{align} \label{oob}
\widehat{t}_{rf} &= \sum_{ k \in U} \widehat{m}_{rf}(\bx_k) + \dfrac{1}{B} \sum_{b= 1}^B \sum_{k \in S} \dfrac{\left(1 - \psi_k^{(b, S)}\right)  \left(y_k - \widehat{m}^{(b)}_{tree}(\bx_k,\theta_b^{(S)})\right)}{\pi_k}, 
\end{align}
where $\widehat{m}^{(b)}_{tree}(\mathbf{x}_k, \theta_b^{(S)})$ is the predictor associated with unit $k$ based on the $b$th  stochastic regression tree. 
\end{proposition}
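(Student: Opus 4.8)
The plan is to start from the defining expression (\ref{trf_sample}) for $\widehat{t}_{rf}$, substitute the bagged representation $\widehat{m}_{rf}(\bx_k) = B^{-1}\sum_{b=1}^B \widehat{m}^{(b)}_{tree}(\bx_k, \theta_b^{(S)})$ recalled just above the statement, and then reduce the claim to a single per-tree identity. Since the leading term $\sum_{k\in U}\widehat{m}_{rf}(\bx_k)$ already appears unchanged in (\ref{oob}), the entire content of the proposition lies in rewriting the design-weighted adjustment term $\sum_{k\in S}(y_k - \widehat{m}_{rf}(\bx_k))/\pi_k$. As $\sum_{k\in S} y_k/\pi_k$ does not depend on $b$, I would pull $B^{-1}\sum_b$ outside and write this adjustment as $B^{-1}\sum_{b=1}^B \sum_{k\in S}(y_k - \widehat{m}^{(b)}_{tree}(\bx_k, \theta_b^{(S)}))/\pi_k$.

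First I would decompose, for each fixed tree $b$, the residual $y_k - \widehat{m}^{(b)}_{tree}(\bx_k, \theta_b^{(S)})$ as an out-of-bag part weighted by $(1 - \psi_k^{(b,S)})$ plus an in-bag part weighted by $\psi_k^{(b,S)}$. The out-of-bag part is exactly the term appearing in (\ref{oob}), so, after averaging over $b$, the proposition is equivalent to the vanishing of the in-bag residual sum for every $b$:
\begin{equation*}
\sum_{k\in S} \frac{\psi_k^{(b,S)}\left(y_k - \widehat{m}^{(b)}_{tree}(\bx_k, \theta_b^{(S)})\right)}{\pi_k} = 0.
\end{equation*}

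The key step, and the main obstacle, is establishing this node-wise orthogonality identity. Here I would invoke the fact, recalled just before the statement, that within each terminal node the per-tree prediction equals the $\pi_k^{-1}$-weighted (Horvitz--Thompson) average of the in-bag $y$-values given by (\ref{samplePred})--(\ref{echWeights}); equivalently, it is the weighted-least-squares fit of the one-way ANOVA model with the node-membership indicators $\{\mathds{1}_{\bx_k\in A_j^{(bS)}}\}_{j}$ as regressors and weights $\pi_k^{-1}$, restricted to the in-bag units $\{k : \psi_k^{(b,S)} = 1\}$. I would partition $S$ according to the terminal nodes $\{A_j^{(bS)}\}_{j=1}^{J_{bS}}$ of the $b$th tree and, on each node, observe that the prediction is the constant $\big(\sum_{k} \pi_k^{-1}\psi_k^{(b,S)}\mathds{1}_{\bx_k\in A_j^{(bS)}} y_k\big)/\widehat{N}(\bx_k, \theta_b^{(S)})$; substituting this constant makes the in-bag $\pi^{-1}$-weighted residuals cancel exactly within that node, which is precisely the normal equation of the WLS ANOVA fit. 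Summing these vanishing contributions over all $J_{bS}$ nodes yields the displayed identity for each $b$.

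Finally I would average the per-tree identities over $b=1,\dots,B$, reinsert $\sum_{k\in U}\widehat{m}_{rf}(\bx_k)$, and collect terms to recover (\ref{oob}). I expect the bookkeeping to be routine once the node-wise cancellation is in place; the only subtlety worth flagging is that the cancellation hinges on using the same design weights $\pi_k^{-1}$ both in forming $\widehat{N}(\bx_k, \theta_b^{(S)})$ and in the adjustment term, so that the in-bag weighted residuals genuinely satisfy the ANOVA normal equations and vanish exactly rather than approximately.
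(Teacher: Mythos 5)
Your proposal is correct and follows essentially the same route as the paper: decompose the design-weighted adjustment term tree by tree into out-of-bag and in-bag parts, then kill the in-bag part via the weighted-least-squares (one-way ANOVA) normal equations, which here amount to each terminal node's prediction being the $\pi_k^{-1}$-weighted average of its in-bag $y$-values. Your node-by-node cancellation is exactly the paper's computation $\mathbf{1}^{\top}_{J_{bS}}\hat{\mathbf{z}}^{(b)}_{k}=1$ written out explicitly, and you correctly flag the one genuine subtlety, namely that $\widehat{N}(\bx_k,\theta_b^{(S)})$ must carry the same $\pi_\ell^{-1}$ weights as the adjustment term for the cancellation to be exact.
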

The proof of Proposition \ref{prop_projection}  is given in the Appendix. It follows from Proposition \ref{prop_projection}, that the second term on the right hand-side of (\ref{oob}) vanishes if (i) $\psi_k^{(b, S)}=1$ for all $k\in S.$ That is, the estimator $\widehat{t}_{rf}$  reduces to the so-called projection form \citep{SarndalLivre, breidt_claeskens_opsomer_2005, goga_2005}
\begin{equation*}
\widehat{t}_{rf} = \sum_{ k \in U} \widehat{m}_{rf}(\bx_k)
\end{equation*}
if the RF algorithm does not involve a resampling mechanism. In addition, the second term on the right hand-side of (\ref{oob}) vanishes if $y_k=c$ for all $k$, for some $c \in \mathbb{R}$ or  if the trees in the forest are fully grown (i.e., each terminal node contains a single observation), which implies that the observations $y_k$ and the corresponding prediction $\widehat{m}^{(b)}_{tree}(\bx_k,\theta_b^{(S)})$ coincide. When the estimator $\widehat{t}_{rf}$ can expressed in the projection form, the weights $w_{ks}'$ given by (\ref{weight_rf_ech}) are always positive and cannot exceed the number of terminal nodes from the largest tree of the forest.

In practice, a resampling mechanism is typically used with RF algorithms. In this case, the second term on the right hand-side of (\ref{oob}) does not vanish and is equal to the  weighted sum of residuals computed for the non-resampled units, also called the \textit{out-of-bag} individuals \citep [Chapter 8]{hastie_tibshirani_2015}, from each of the $B$ trees. The second term on the right hand-side of (\ref{oob}) can then be viewed as a correction term which brings additional information from the units not used in computing the predictions $\widehat{m}^{(b)}_{tree}(\cdot, \theta_b^{(S)}), b=1, \ldots, B.$

\section{Asymptotic properties}\label{section4}

To establish the asymptotic properties of the proposed estimators and to derive the associated variance estimators, we consider the asymptotic framework of \cite{isaky_fuller_1982}. We start with an increasing sequence of embedded finite populations $\{U_{v}\}_{v \in \mathbb{N}}$ of size $\{N_v\}_{v \in \mathbb{N}}$. In each finite population $U_v$, a sample  of size $n_v$ is selected according to a sampling design $\mathcal{P}_v(S_v=s_v\mid \mathbf{Z}_U)$. While the finite populations are assumed to be embedded, we do not require this property to hold for the samples $\{S_v\}_{v \in\mathbb N}$. This asymptotic framework assumes that $v$ goes to infinity, so that both the finite population sizes and the samples sizes go to infinity. 
To improve readability, we shall use the subscript $v$ only in the quantities $U_v, N_v$ and $n_v$; quantities such as $\pi_{k,v}$ shall be denoted simply as $\pi_k$. 

\subsubsection*{Assumptions: RF model-assisted estimator $\widehat{t}_{rf}^*$}
We make the following assumptions:

\hypH{{There exists a positive constant $C$ such that  $\sup_{k\in U_v}|y_k|\leqslant C<\infty.$} \label{H1}}

\hypH{We assume  that $ \lim\limits_{v \to \infty}\dfrac{n_v}{N_v} = \pi \in (0, 1)$.\label{H2}}

\hypH{There exist positive constants $\lambda$ and $\lambda^*$ such that $\min\limits_{k \in U_v} \pi_k \geqslant \lambda > 0$ and $\min\limits_{k,\ell \in U_v} \pi_{k\ell} \geqslant \lambda^*> 0.$ Also, we assume that $\limsup\limits_{v \to \infty} n_v \max\limits_{k \neq \ell \in U_v} | \pi_{k\ell} - \pi_k \pi_\ell| < \infty$.
\label{H3}}

Assumptions (H\ref{H1})-(H\ref{H3}) have been extensively used in parametric, nonparametric and functional model-assisted estimation    \citep{robinson_sarndal_1983, breidt_opsomer_2000, breidt_claeskens_opsomer_2005, goga_2005, goga_ruiz-gazen, CGL2012}. Assumption (H\ref{H1}) implies that the survey variable $Y$ is uniformly bounded \citep{breidt_opsomer_2000,cardot_chaouch_goga_labruere_2010}. Assumptions (H\ref{H2}) and (H\ref{H3}) deal with the first and second order inclusion probabilities and they are satisfied for the classical fixed-size sampling designs; see for example, \cite{robinson_sarndal_1983} and \cite{breidt_opsomer_2000}.  Furthermore, we assume that the minimum number of observations $N_{0v}$ in a terminal nodes is growing to infinity and we make the following additional assumption 
\hypC{The number of subsampled elements  $N_v'$ is such that $\lim_{v \to \infty} N'_v/N_v \in (0; 1]$.\label{C1}}
This assumption requires that the number $N'_v$ of elements in each subsample increases at the same speed as the population size $N_v$, allowing  each terminal node to have at least $N_{0v}$ observations.


\subsubsection*{Assumptions: RF model-assisted estimator $\widehat{t}_{rf}$}
In addition to the above assumptions, we make the following assumptions to establish the asymptotic properties of  $\hat t_{rf}$ given by (\ref{trf_sample}).

\hypH{ There exists a positive constant $C_1$ such that $n_v\max_{k\neq \ell\in U_v}\left|\E_p\left\{(I_k-\pi_k)(I_{\ell}-\pi_{\ell})|\widehat{\mathcal P}_S\right\}\right|\leqslant C_1.$\label{H4}}


\hypH{The random forests based on population partitions and those based on sample partitions are such that, for all $\bx \in \mathbb R^p:$ 
$$
\E_p\left(\widehat{\widetilde m}_{rf}(\bx)-\widetilde m_{rf}(\bx)\right)^2=o(1),
$$
where $\widehat{\widetilde m}_{rf}(\bx)= \sum_{ \ell\in U_v}\dfrac{1}{B} \sum_{b=1}^B \dfrac{ \psi_{\ell}^{(b, S)} \mathds{1}_{\bx_{\ell}\in A^{(S)} \left(\boldsymbol{\mathbf{x}} , \theta_b^{(S)} \right)}y_{\ell}}{ \widehat{\widetilde{N}}(\bx, \theta_b^{(S)}) }$
with
$\widehat{\widetilde{N}}(\bx, \theta_b^{(S)})  = \displaystyle\sum_{\ell\in U_v} \psi_\ell^{(b, S)} \mathds{1}_{\bx_{\ell}\in A^{(S)} \left(\boldsymbol{\mathbf{x}} , \theta_b^{(S)}\right)}$.

\label{H5}}

Assumption (H\ref{H4}) is similar to that used by \cite{toth2011building} and \cite{mcconville2019automated}; it requires that, as the sample and population size grow, the influence of extreme observations on the sample partitions decreases. Assumption (H\ref{H5}) requires that the average number of elements at the population level in the sample partitions converges to the average number of population elements in the population partitions. It implicitly assumes that the sample partitions converge to the population partitions. A similar result was established in \cite{toth2011building} in the case of regression trees.  \cite{toth2011building} evaluated the properties of point estimators with respect to the joint distribution induced by the superpopulation model and the sampling design. In a \textit{iid} setting, \cite{scornet2015consistency} showed that the population partitions converge to the theoretical partitions. Assumption (H\ref{H5}) can thus be viewed as a design-based version of the result from \cite{scornet2015consistency}. In the Supplementary Material, we conduct a simulation study, whose results suggest that Assumption  (H\ref{H5}) seems to be verified, at least in our experiments. More research is needed to provide a rigorous proof of Assumption (H\ref{H5}) in the design-based approach and is beyond the scope of this article.   

%

As in  the case of model-assisted estimators based on RF with population-based partitions, we assume that the minimum number of observations, $n_{0v},$ in the terminal nodes is also growing to infinity and we assume the following additional assumption about  the RF resampling algorithm :
\hypC{The number of subsampled elements  $n'_v$ is such that $\lim_{v \to \infty} n'_v/n_v \in (0; 1]$.\label{C1*}}
This assumption requires that the number $n'_v$ of elements in each subsample increases at the same speed as the sample size $n_v,$ allowing  each terminal node to have at least $n_{0v}$ observations.



\subsection{Asymptotic results}

In this section, we state some results pertaining to sequences of RF model-assisted estimators $\{\widehat{t}_{rf}\}$. The corresponding results for the model-assisted estimators $\{\widehat{t}^*_{rf}\}$ can be found in the Supplementary Material. 

\begin{result} \label{res1}
Consider a sequence of RF model-assisted estimators $\{\widehat{t}_{rf}\}$. Then, there exist positive constants $\tilde C_1, \tilde C_2$  such that 
$$
\mathbb{E}_p  \bigg\rvert \dfrac{1}{N_v} \left(\widehat{t}_{rf} - t_y\right)\bigg\rvert \leqslant \frac{\tilde C_1}{\sqrt{n_v}}+\frac{\tilde C_2}{n_{0v}}, \quad\mbox{with $\xi$-probability one.}
$$
If $\displaystyle\frac{n^{u}_v}{n_{0v}}=O(1)$ with $1/2\leqslant u \leqslant 1,$ then there exists a positive constant $\tilde C$ such that
	\begin{equation*}
	\mathbb{E}_p  \bigg\rvert \dfrac{1}{N_v} \left(\widehat{t}_{rf} - t_y\right)\bigg\rvert  \leqslant \dfrac{\tilde C}{\sqrt{n_{v}}}, \quad\mbox{with $\xi$-probability one.}
	\end{equation*}

\end{result}



Result \ref{res1} implies that the RF model-assisted estimator $\{\widehat{t}_{rf}\}$ is asymptotically design-unbiased, i.e.,
\begin{equation*}
\lim_{v \to \infty} \mathds{E}_p \left[\dfrac{1}{N_v} \left(\widehat{t}_{rf} - t_y\right)\right] = 0, \quad \mbox{with $\xi$-probability one},
\end{equation*}
and design-consistent in the sense that
\begin{equation*}
\lim_{v \to \infty} \mathds{E}_p \left[\mathbf{1}_{ \{N_v^{-1}\rvert  \widehat{t}_{rf} - t_y \rvert  >\eta\}} \right]= 0,\quad \mbox{with $\xi$-probability one}
\end{equation*}
for all $\eta >0. $ Moreover, 
 if $n_{0v}$ is large enough with respect to the sample size $n_v$, the RF estimator $\widehat{t}_{rf}$ is $\sqrt{n_v}$-consistent. For a given partition, note that the number of terminal nodes is of order $O(n_v/n_{0v})$, and if $n_{0v}$ satisfies the condition from the Result \ref{res1}, the number of terminal nodes is of order $O(n^{1-u})$ for $1/2\leqslant u \leqslant 1$. 
The next result shows that the RF model-assisted estimator $\widehat{t}_{rf} $ is asymptotically equivalent to the pseudo-generalized difference estimator: 
\begin{eqnarray}\label{pgd_rf}
\widehat{t}_{pgd} = \sum_{k\in U}\widetilde{m}_{rf}(\mathbf{x}_k)+\sum_{k\in S}\frac{y_k-\widetilde{m}_{rf}(\mathbf{x}_k)}{\pi_k},
\end{eqnarray}
where $\widetilde{m}_{rf}(\mathbf{x}_k)$ is given by  (\ref{localAve0}).

\begin{result}\label{res2}
Consider a sequence of RF estimators $\{\widehat{t}_{rf}\}.$ Assume also that $\displaystyle\frac{n^{u}_v}{n_{0v}}=O(1)$ with $1/2< u\leqslant 1.$  
Then, $\{\widehat{t}_{rf}\}$ is asymptotically equivalent to the pseudo-generalized difference estimator $\widehat{t}_{pgd}$ in the sense that
	\begin{equation*}
	\dfrac{\sqrt{n_v}}{N_v} \left(\widehat{t}_{rf}- t_y\right)=\dfrac{\sqrt{n_v}}{N_v}\left(\widehat{t}_{pgd} -t_y\right) + o_\mathbb{P}(1).
	\end{equation*}
\end{result}
\noindent From  Proposition  \ref{res2}, it follows that the asymptotic variance of $\widehat{t}_{rf}$  can be approximated by the variance of (\ref{pgd_rf}). That is, 
\begin{eqnarray}
\mathbb{AV}_p\left(\frac{1}{N_v}\widehat{t}_{rf}\right)&=&\mathbb{V}_p \left(\frac{1}{N_v}\widehat{t}_{pgd}\right)\nonumber \\
&=& \frac{1}{N^2_v}\sum_{k\in U_v}\sum_{\ell\in U_v}(\pi_{kl}-\pi_k\pi_\ell)\frac{y_k-\widetilde{m}_{rf}(\mathbf{x}_k)}{\pi_k}\frac{y_\ell-\widetilde{m}_{rf}(\mathbf{x}_\ell)}{\pi_\ell}.\nonumber\\\label{var_diff_est}
\end{eqnarray}
While the RF model-assisted estimator $\widehat{t}_{rf}$ is design-consistent as long as $n_{0v}$ and $n_v$ grow to infinity (Result \ref{res1}), the asymptotic equivalence of $\widehat{t}_{rf}$ with the pseudo-generalized difference estimator $\widehat{t}_{pgd}$ is obtained only for $n_{0v}$ satisfying a certain rate. Stronger assumptions on higher-order inclusion probabilities \citep{breidt_opsomer_2000, mcconville2019automated} are required in order to show that the asymptotic mean squared error of  $\widehat{t}_{rf}$ is equivalent to the variance of the pseudo-generalized difference estimator. We do not pursue this further. 

Expression (\ref{var_diff_est}) suggests that $\widehat{t}_{rf}$ is efficient if  the residuals $y_k - \widetilde{m}_{rf}(\bx_k)$
are small for all $k\in U_v. $ 
The asymptotic variance given in (\ref{var_diff_est}) cannot be computed in practice because the residuals, $y_k - \widetilde{m}_{rf}(\bx_k)$, $k \in U,$ are unknown.   
Assuming that $\pi_{k\ell} >0$ for all pairs $(k, \ell) \in U_v\times U_v,$ a design-consistent estimator of 
$\mathbb{AV}_p\left(\frac{1}{N_v}\widehat{t}_{rf}\right)$
 is given by
\begin{equation} \label{varestimator}
\widehat{\V}_{rf}\left(\frac{1}{N_v}\hat t_{rf}\right) =  \dfrac{1}{N_v^2} \sum_{k\in U_v} \sum_{\ell \in U_v}  I_k I_\ell\dfrac{\pi_{k\ell} - \pi_k \pi_\ell }{\pi_{k\ell}} \dfrac{y_k - \widehat{m}_{rf}(\bx_k)}{\pi_k} \dfrac{y_\ell - \widehat{m}_{rf}(\bx_\ell)}{\pi_\ell},
\end{equation}
where $\widehat{m}_{rf}(\bx_k)$ is given by (\ref{samplePred}).
To  establish the design consistency of (\ref{varestimator}), we require the following additional assumption:

\hypH{We assume that 
	$ \lim\limits_{v \to \infty} \max\limits_{i,j,k,\ell \in D_{4, N_v}} \rvert \mathbb{E}_p \left\{ \left(I_iI_j - \pi_i\pi_j\right) \left(I_kI_\ell - \pi_k \pi_\ell\right) \right\}\rvert =0,$
	where $D_{4,N_v}$ denotes the set of distinct $4$-tuples from $U_v.$ \label{H6}}

Assumption (H\ref{H6}) was suggested by \cite{breidt_opsomer_2000} and, together with (H\ref{H2})-(H\ref{H3}), is used to establish the design consistency of the unbiased estimator of the variance of the Horvitz-Thompson estimator $\sum_{k\in S_v}y_k/\pi_k,$  assuming that the survey variable $Y$ has finite fourth moment. Assumption (H\ref{H6}) is satisfied for simple random sampling without replacement and stratified simple random sampling without replacement. It is also satisfied for high entropy sampling designs \citep{boistard_lopuhaa_ruiz-gazen_2012, CGL2012d}. 

\begin{result}\label{var_est_F1}
	Consider a sequence of RF model-assisted estimators $\{\widehat{t}_{rf}\}.$  Assume also that $\displaystyle\frac{n^{u}_v}{n_{0v}}=O(1)$ with $1/2< u\leqslant 1.$  
	Then, the variance estimator $\widehat{\V}_{rf}(\widehat t_{rf})$ is asymptotically design-consistent for the asymptotic variance $\mathbb{AV}_p \left( \widehat{t}_{rf}\right).$ That is,
	\begin{equation*}
	\lim_{v \to \infty}\mathbb{E}_p \left(\dfrac{n_v}{N_v^2} \biggr \rvert\widehat{\V}_{rf}(\widehat t_{rf})- \mathbb{AV}_p (\widehat{t}_{rf})  \biggr \rvert \right) = 0. 
	\end{equation*}
\end{result}
Finally, we establish the central limit theorem that can be used to obtain
asymptotically normal confidence intervals of $t_y$. To that end, we assume that $\widehat{t}_{pgd}$ is normally distributed, an assumption that is satisfied in many classical sampling designs; e.g., see \cite{Fuller2009}.    
\hypH{ The sequence of pseudo-generalized  difference estimators $\{\widehat{t}_{pgd}\}$ satisfies 
	\begin{equation*}
	\dfrac{N_v^{-1} \left(\widehat{t}_{pgd} - t_y\right)}{\sqrt{\mathbb{V}_p \left(N^{-1}_v\widehat{t}_{pgd}\right)}}  \xrightarrow[v \to \infty]{\mathcal{L}}\mathcal{N} (0, 1),
	\end{equation*}
	where $\mathbb{V}_p \left(N^{-1}_v\widehat{t}_{pgd}\right)$ is given by (\ref{var_diff_est}).\label{H10}}


\begin{result}\label{TCL_est_F1}
	Consider the sequence of RF estimators  $\{\widehat{t}_{rf}\}.$ 
	Then,
	\begin{equation*}
	\dfrac{N_v^{-1} \left(\widehat{t}_{rf} - t_y\right)}{\sqrt{\widehat{\V}_{rf}(N^{-1}_v\widehat{t}_{rf})}}  \xrightarrow[v \to \infty]{\mathcal{L}}\mathcal{N} \left(0, 1\right).
	\end{equation*}
\end{result}
The proof of Result \ref{TCL_est_F1} is a direct application of Results \ref{res2} and \ref{var_est_F1}, and is thus omitted.

%
%

\section{A model calibration procedure for handling multiple survey variables}\label{model_calib}

In practice, most surveys conducted by national statistical offices (NSO) collect information on multiple survey variables. The collected data are stored in rectangular data files. A column of weights, referred to as a weighting system, is made available on the data file. This weighting system can then be applied to obtain an estimate for any survey variable. However, applying a RF algorithm yield the variable-specific weights (\ref{weight_rf_ech}). In other words, the weights were derived to obtain an estimate of the total for a specific survey variable $Y$. Hence, applying the weights (\ref{weight_rf_ech}) to other survey variables may produce inefficient estimators. A solution to this issue consists of developing multiple sets of weights, one for each survey variable. This is usually deemed undesirable by data users who are used to work with a single set of weights. In this section, we describe a model calibration procedure \citep{wu_sitter_2001}, originally proposed by \cite{montanari_ranalli_2009}, that yields a single weighting system while accounting for multiple survey variables that are deemed important.   

Suppose that we can identify a subset of survey variables $Y_1, \ldots, Y_q,$ that are deemed important. 
We postulate the following working model for each variable:
\begin{equation}\label{mod3}
\mathbb{E}\left[Y_{jk}\mid \boldsymbol{X}_k = \mathbf{x}_k\right]  = m^{(j)}(\mathbf{x}_k^{(j)}), \quad j=1, \cdots, q,
\end{equation}
where $m^{(j)}(\cdot)$ is an unknown function and $\mathbf{x}_k^{(j)}$ is a vector of auxiliary variable associated with unit $k$ for the variable $Y_j$. We allow a different link functions $m(\cdot)$ and different sets of explanatory variables for each of the survey variables $Y_1, \ldots, Y_q.$ The interest lies in estimating the population totals $t_{y_1}, \ldots, t_{y_q}$. We assume that each of these totals is estimated using a model-assisted estimator of the form (\ref{ma}) but with possibly different methods. For instance, some of the estimates may be based on a parametric working model, while others may be based on a nonparametric working model (e.g., RF). We can construct the set of $q$ predicted values  $\widehat{m}^{(1)}(\mathbf{x}_k^{(1)}), \ldots, \widehat{m}^{(q)}(\mathbf{x}_k^{(q)}),$ for  $k \in U.$

In addition, we assume that, at the estimation stage, a vector $\mathbf{v}_k$ of size $q'$ of calibration variables is available for $k \in S$ and that the corresponding vector of population totals $\mathbf{t}_{\mathbf{v}}= \sum_{k \in U}\mathbf{v}_k$ is known. In practice, survey managers often want to ensure consistency between survey estimates and known population totals for important variables such as  gender and age group. 


Given these predictions  $\widehat{m}^{(1)}(\mathbf{x}_k^{(1)}), \ldots, \widehat{m}^{(q)}(\mathbf{x}_k^{(q)}),$ and the vector calibration variables $\mathbf{v},$ we seek calibrated weights ${w}_k^C,$ $k \in S,$ as close as possible to the initial weights $\pi_k^{-1}$ subject to the following $q+q'+1$ calibration constraints:
\begin{equation}
\sum_{k \in S}{w}_k^C=N,\label{c1}
\end{equation}
\begin{equation}
\sum_{k \in S}{w}_k^C\widehat{m}^{(j)}(\mathbf{x}_k^{(j)})=\sum_{k \in U} \widehat{m}^{(j)}(\mathbf{x}_k^{(j)}),\quad j=1, \dots, q \label{c3},
\end{equation}
\begin{equation}\label{c4}
\sum_{k \in S} {w}_k^C\mathbf{v}_k=\sum_{k \in U} \mathbf{v}_k.
\end{equation}
More specifically, we seek calibrated weights ${w}_k^C$ such that
$$\sum_{k \in S}G({w}_{k}^C/\pi_k^{-1})$$
is minimized subject to (\ref{c1})--(\ref{c4}), where $G(\cdot)$ is a pseudo-distance function measuring the closeness between two sets of weights, such that $G({w}_{k}^C/\pi_k^{-1})\ge 0,$ differentiable with respect to ${w}_k^C$, strictly convex, with continuous derivatives $g({w}_{k}^C/\pi_k^{-1})=\partial G({w}_{k}^C/\pi_k^{-1})/\partial {w}_{k}^C$  such that $g(1)=0$; see \cite{deville_sarndal_1992}.

The weights ${w}_k^C$ are given by
 \begin{equation}\label{cal_weight}
  {w}_k^C= {\pi}_k^{-1} F(\mbox{\boldmath$ \widehat {\boldsymbol{\lambda}}$}^{\top} \mathbf{h}_k),\nonumber
 \end{equation}
 where $F(.)$ is the calibration function defined as the inverse of $g(.),$ $\mbox{\boldmath$ \widehat {\boldsymbol{\lambda}}$}$ is a $q+q'+1$-vector of estimated coefficients and
 \begin{equation}\label{h}
\mathbf{h}_k = \left(1,\widehat{m}^{(1)}_{k}-\widehat{\overline{m}}^{(1)},\ldots, \widehat{m}^{(q)}_{k}-\widehat{\overline{m}}^{(q)}, v_{1k}, \ldots, v_{q'k}\right)^{\top}
 \end{equation}
with  $\widehat{m}^{(j)}_{k}\equiv m^{(j)}(\mathbf{x}_k^{(j)})$  and $\widehat{\overline{m}}^{(j)}\equiv\sum_{k \in S}\pi_k^{-1}\widehat{m}_{k}^{(j)}/\sum_{k \in S}\pi_k^{-1},$ $j=1, \cdots, q.$ 

The calibrated weights $w_k^C$ may be viewed as a compressed score summarizing the information contained in the $q$ working models (\ref{mod3}) and the vector of calibration variables $\mathbf{v}$.
The weighting system $\{{w}_k^C; k \in S\}$ may be then applied to any survey variable $Y$, which leads to the model calibration type estimator
$$\widehat{t}_{y, mc}=\sum_{k \in S}{w}_k^Cy_k.$$ 
If the number of calibration constraints $q+q'+1$ is large, the resulting weights ${w}_k^C$ may be highly dispersed leading to potentially unstable estimates $\widehat{t}_{y, mc}$. A number of pseudo-distance functions such as the truncated linear and the logit methods may be used to limit the variability of the weights ${w}_k^C;$ see \cite{deville_sarndal_1992} for a description of these methods. A simple alternative is to use additional constraints on the weights as part of the calibration constraints. For instance, we may impose that ${w}_k^C< w_0,$ where $w_0$ is a threshold set by the survey statistician; see also \cite{Santacatterina2018} for alternative constraints on the weights.  Finally, we can relax the calibration constraints (\ref{c1})-(\ref{c4}) by considering a $L^2$-penalized criterion,  leading to a ridge-type model calibration estimator; see \cite{montanari_ranalli_2009}. \cite{montanari_ranalli_2009}  reports the results of a simulation study, assessing the performance of point estimators obtained through multiple and ridge model calibration methods.

\section{Simulation study}\label{simulation}
\subsection{Performance of point estimators} \label{simu1}
We conducted a simulation study to assess the performance of several model-assisted estimators, in terms of bias and efficiency. We generated a finite population of size $N=10, 000,$ consisting of a set of auxiliary variables and  8 survey variables. We first generated 7 auxiliary variables $X_0, \cdots, X_6,$ according to the following distributions:
$X_0 \sim \mathcal{U} (0,1)$;
$X_1 \sim \mathcal{N} \left(0,1\right)$,
$X_2 \sim \text{Beta} \left(3,1\right)$,
$X_3 \sim 2 \times \text{Gamma} \left(3, 2\right)$,
$X_4 \sim \text{Bernoulli}  (0.7),$ 
$X_5 \sim \text{Multinomial} (0.4, 0.3, 0.3)$ and $X_6 \sim \mathcal{E} (1)$. The variables $X_1, X_2, X_3, $ and $X_6$ have been standardized so as to have a mean and a variance equal to 0 and 1, respectively. 
To assess the performance of the proposed method in a high-dimensional setting, we also generated $100$ additional auxiliary variables $V_1, V_2, \cdots, V_{100},$ from a uniform distribution $\mathcal{U} (-1, 1)$. Given the $X$-variables and the $V$-variables, we generated the survey variables according to the following models:
\begin{enumerate}[Model 1:]
\item  $Y_1 = 1 + 2 \left(X_0 - 0.5 \right) + \mathcal{N} \left(0, 0.1\right) $  ; 
\item  $ Y_2 = 1 + 2 \left(X_0 - 0.5 \right)^2 +\mathcal{N} \left(0, 0.1\right)$; 
\item $Y_3 = 2 + X_6 + X_2 + X_3 + X_4 + X_5 + \mathcal{N} \left(0, 1\right)$; 
\item $Y_4 = 2 + \left(X_6 + X_2 + X_3 \right)^2 + \mathcal{N} \left(0, 1\right)$; 
\item $Y_{5} = 0.5 X_5 + \exp(-X_1) + 3X_4 + \exp(-X_6) + \mathcal{E} \left(1\right)$; 
\item $Y_6 = V_1^2 + \exp(-V_2^2) + \mathcal{N} \left(0, 0.3\right)$;
\item $Y_7 = V_1^2 + \exp(-V_2^2) + \mathcal{N} \left(0, 0.3\right)$;
\item $Y_8 = 3 + V_1V_2 + V_3^2 - V_4V_7+V_8V_{10} - V_6^2 + \mathcal{N} \left(0, 0.5\right)$.
\end{enumerate}
The errors in Model 5 have been scaled and centered so as to have a mean and a variance equal to 0 and 1, respectively. 
Models 1 and 2 were used in \cite{breidt_opsomer_2000}, while Models 7 and 8 were introduced in \cite{scornet2017tuning}. Models 1-8 were generated so as to include  a relatively wide range of relationships between the $Y$-variable and the set of explanatory variables:  linear/non-linear relationships, presence/absence of quadratic terms and presence/absence of interactions.  Our scenarios also included low, medium and high-dimensional settings. 
From the population, we selected $R=5\ 000$ samples, of size $n,$ according to simple random sampling without replacement. We used $n=250$ and $n=1000$. In each sample, we computed the following estimators: (i) The Horvitz-Thompson (HT) estimator given by (\ref{ht}); (ii) The generalized regression (GREG) estimator given by (\ref{ma}) with $\widehat{m}(\mathbf{x}_k) =\mathbf{x}_k^{\top} \widehat{\boldsymbol{\beta}}$; (iii) The model-assisted estimator (\ref{ma}) with $\widehat{m}(\mathbf{x}_k)$ obtained through regression trees (CART); and (iv) The model-assisted estimator (\ref{ma}) based on RF, where $\widehat{m}(\mathbf{x}_k)$ is given by (\ref{samplePred}). We considered three RF algorithms, each based on 1, 000 trees. The first (RF1) was based on bootstrap. The second algorithm (RF2) was based on subsampling  with a sampling fraction equal to 0.63 \citep{scornet2017tuning}. For both RF1 and RF2, the minimum number of observations per terminal node was set to $n_0=5$. Finally, the third algorithm (RF3) was based on bootstrap with $n_0=\sqrt{n}$ observations in each terminal node.  In RF1-RF3, we used $m_{try} = \sqrt{p}$ as it is the default number of variables considered for the splitting process in most software packages dealing with RF for regression. 

For the estimators GREG, CART, RF1, RF2 and RF3, the predictions $\widehat{m}(\mathbf{x}_k)$ were obtained using the working models described in Table \ref{Table_wm}.  For the survey variables $Y_7$ and $Y_8$, the working models were based on a large number of superfluous explanatory variables (50 and 100, respectively), which allowed us to assess the behavior of the resulting estimators in a medium/high dimensional setting.

\begin{table}[h!]
\caption{The working models}
	\centering
\begin{tabular}{|c|c|}
  \hline
  Survey variable & Vector of explanatory variable $\mathbf{X}$ \\
                  &  used in the working model  \\
  \hline
  $Y_1$ & $X_0$ \\
  $Y_2$ & $X_0$ \\
  $Y_3$ & $X_1-X_6$ \\
  $Y_4$ & $X_1-X_6$  \\
  $Y_5$ & $X_1-X_6$  \\
  $Y_6$ & $V_1-V_{10}$  \\
  $Y_7$ & $V_1-V_{50}$  \\
  $Y_8$ & $V_1-V_{100}$  \\
  \hline
\end{tabular}\label{Table_wm}
\end{table}

We were interested in estimating the population total $t_{y_{j}}=\sum_{k \in U}y_{kj},$ $j=1, \ldots, 8.$ As a measure of bias of an estimator $\hat{t}_{y_{j}}$, we used the Monte Carlo percent relative bias defined as
\begin{equation*}
RB(\widehat{t}_{y_{j}})=100 \times \dfrac{1}{R} \sum_{r=1}^R \dfrac{ (\widehat{t}^{(r)}_{y_{j}} - t_{y_{j}}) }{ t_{y_{j}}},
\end{equation*}
where $ \widehat{t}^{(r)}_{y_{j}}$ denotes the estimator $\widehat{t}_{y_{j}}$  in the $r$th iteration, $r=1, ..., R$. As a measure of efficiency of an estimator $\hat{t}_{y_{j}}$, we used the relative efficiency, using the Horvitz-Thompson estimator, $\widehat{t}_{y_{j}, \pi},$ as the reference:
\begin{equation*}
RE(\widehat{t}_{y_{j}}) =100 \times \dfrac{ MSE(\widehat{t}_{y_{j}}) }{ MSE(\widehat{t}_{y_{j}, \pi}) },
\end{equation*}
where $$MSE( \widehat{t}_{y_{j}} ) = \dfrac{1}{R} \sum_{r=1}^R (  \widehat{t}^{(r)}_{y_{j}} - t_{y_{j}} )^2$$ 
and $MSE(\widehat{t}_{y_{j}, \pi})$ is defined similarly. 
The results are displayed in Tables \ref{Table_R1} and \ref{Table_R2}. The simulations were performed using the R software with the  package \textit{ranger} \citep{wright2015ranger}.

\begin{table}[h!]
		\centering	\footnotesize{	\caption{Monte Carlo percent relative bias (RB) and Monte Carlo efficiency (RE) of several model-assisted estimators for $n=250$}
	\vspace{2mm}
		\begin{tabular}{llrlrlrlrlr}
			\hline
			Population &    & \multicolumn{1}{c}{GREG} &  & \multicolumn{1}{c}{CART} &  & \multicolumn{1}{c}{RF1} &  & \multicolumn{1}{c}{RF2} &  & \multicolumn{1}{c}{RF3} \\ \cline{1-3} \hline
			&    & \multicolumn{1}{l}{}     &  & \multicolumn{1}{l}{}     &  & \multicolumn{1}{l}{}    &  & \multicolumn{1}{l}{}    &  & \multicolumn{1}{l}{}    \\
			\multirow{2}{*}{$Y_1$} & RB & -0.0                  &  & -0.0                    &  & -0.0                   &  & -0.0                   &  & 0.0                       \\
			& RE & 3.0                     &  & 3.5                     &  & 3.7                    &  & 3.6                    &  & 3.4                    \\
			&    & \multicolumn{1}{l}{}     &  & \multicolumn{1}{l}{}     &  & \multicolumn{1}{l}{}    &  & \multicolumn{1}{l}{}    &  & \multicolumn{1}{l}{}    \\
			\multirow{2}{*}{$Y_2$}                  & RB & -0.0                    &  & 0.0                        &  & 0.0                       &  & 0.0                       &  & 0.0                       \\
			& RE & 101.0                  &  & 37.6                    &  & 39.4                   &  & 38.3                   &  & 35.0                   \\
			&    & \multicolumn{1}{l}{}     &  & \multicolumn{1}{l}{}     &  & \multicolumn{1}{l}{}    &  & \multicolumn{1}{l}{}    &  & \multicolumn{1}{l}{}    \\
			\multirow{2}{*}{$Y_3$}                   & RB & 0.0                     &  & -0.0                    &  & -0.1                   &  & -0.1                    &  & -0.0                   \\
			& RE & 19.6                    &  & 55.2                    &  & 33.8                   &  & 34.0                   &  & 35.4                   \\
			&    & \multicolumn{1}{l}{}     &  & \multicolumn{1}{l}{}     &  & \multicolumn{1}{l}{}    &  & \multicolumn{1}{l}{}    &  & \multicolumn{1}{l}{}    \\
			\multirow{2}{*}{$Y_4$}                   & RB & -0.7                    &  & -1.2                    &  & -1.2                   &  & -1.5                   &  & -0.7                   \\
			& RE & 81.1                    &  & 61.1                    &  & 49.7                   &  & 49.0                   &  & 53.1                  \\
			&    & \multicolumn{1}{l}{}     &  & \multicolumn{1}{l}{}     &  & \multicolumn{1}{l}{}    &  & \multicolumn{1}{l}{}    &  & \multicolumn{1}{l}{}    \\
			\multirow{2}{*}{$Y_5$}                 & RB & -0.1                    &  & 0.1                      &  & -0.0                   &  & -0.0                   &  & -0.0
			                   \\
			& RE & 37.9                    &  & 32.7                    &  & 25.8                   &  & 26.5                   &  & 30.7                   \\
			&    & \multicolumn{1}{l}{}     &  & \multicolumn{1}{l}{}     &  & \multicolumn{1}{l}{}    &  & \multicolumn{1}{l}{}    &  & \multicolumn{1}{l}{}    \\
			\multirow{2}{*}{$Y_6$}                  & RB & -0.0                    &  & 0.3                     &  & -0.0                   &  & -0.0                   &  & -0.0                   \\
			& RE & 105.2                  &  & 72.2                    &  & 57.5                    &  & 57.5                   &  & 58.3                   \\
			&    & \multicolumn{1}{l}{}     &  & \multicolumn{1}{l}{}     &  & \multicolumn{1}{l}{}    &  & \multicolumn{1}{l}{}    &  & \multicolumn{1}{l}{}    \\
			\multirow{2}{*}{$Y_7$}                  & RB & -0.0                    &  & 0.2                     &  & 0.1                    &  & 0.0                    &  & 0.0
			                    \\
			& RE & 127.6                   &  & 84.3                    &  & 75.8                   &  & 75.5                   &  & 76.8                   \\
			&    & \multicolumn{1}{l}{}     &  & \multicolumn{1}{l}{}     &  & \multicolumn{1}{l}{}    &  & \multicolumn{1}{l}{}    &  & \multicolumn{1}{l}{}    \\
			\multirow{2}{*}{$Y_8$} & RB & 0.0                        &  & 0.0                        &  & 0.0                       &  & 0.0                    &  & 0.0                    \\
			& RE & 127.0                   &  & 135.6                   &  & 92.7                   &  & 92.5                   &  & 95.6                   \\
			&    & \multicolumn{1}{l}{}     &  & \multicolumn{1}{l}{}     &  & \multicolumn{1}{l}{}    &  & \multicolumn{1}{l}{}    &  & \multicolumn{1}{l}{}    \\ \cline{1-3} \hline 
		\end{tabular}\label{Table_R1}
	}

\end{table}

\begin{table}[h!]
\footnotesize{	\centering
	
	\caption{Monte Carlo percent relative bias (RB) and Monte Carlo efficiency (RE) of several model-assisted estimators for $n=1000$.}
	
	\vspace{2mm}
				\renewcommand\arraystretch{1.1}
		\begin{tabular}{lllllllllll}
			\hline
			Population &   & GREG   & CART   &  & RF1   &  & RF2   &  & RF3   &  \\ \hline
			&    &        &        &  &       &  &       &  &       &  \\
			\multirow{2}{*}{$Y_1$} & RB & 0.0      & 0.0      &  & 0.0     &  & 0.0     &  & 0.0     &  \\
			& RE & 2.8   & 3.5   &  & 3.6  &  & 3.5  &  & 3.0  &  \\
			&    &        &        &  &       &  &       &  &       &  \\
			\multirow{2}{*}{$Y_2$} & RB & 0.0      & 0.0      &  & 0.0     &  & 0.0     &  & 0.0     &  \\
			& RE & 100.1 & 38.7  &  & 40.5 &  & 39.6 &  & 33.3 &  \\
			&    &        &        &  &       &  &       &  &       &  \\
			\multirow{2}{*}{$Y_3$} & RB & 0.0      & 0.0   &  & -0.1 &  & -0.1  &  & 0.0     &  \\
			& RE & 20.4  & 41.1  &  & 28.1 &  & 27.8 &  & 31.6 &  \\
			&    &        &        &  &       &  &       &  &       &  \\
			\multirow{2}{*}{$Y_4$} & RB & -0.1  & -1.1  &  & -0.9  &  & -0.7 &  & -0.2 &  \\
			& RE & 78.9  & 52.3  &  & 36.7 &  & 36.1 &  & 44.5 &  \\
			&    &        &        &  &       &  &       &  &       &  \\
			\multirow{2}{*}{$Y_5$} & RB & -0.0  & 0.0   &  & 0.0  &  & 0.0  &  & -0.0 &  \\
			& RE & 37.3  & 24.5  &  & 20.9 &  & 21.2 &  & 24.8 &  \\
			&    &        &        &  &       &  &       &  &       &  \\
			\multirow{2}{*}{$Y_6$}& RB & 0.0      & 0.0   &  & -0.0 &  & -0.0 &  & -0.0 &  \\
			& RE & 101.1 & 65.5   &  & 49.1 &  & 49.2  &  & 50.3  &  \\
			&    &        &        &  &       &  &       &  &       &  \\
			\multirow{2}{*}{$Y_7$} & RB & 0.0      & 0.0   &  & 0.0  &  & 0.0  &  & 0.0  &  \\
			& RE & 105.5 & 73.2  &  & 63.3 &  & 63.2 &  & 65.0 &  \\
			&    &        &        &  &       &  &       &  &       &  \\
			\multirow{2}{*}{$Y_8$} & RB & -0.0  & -0.0  &  & -0.0 &  & -0.0 &  & 0.0     &  \\
			& RE & 166.6 & 137.6 &  & 96.0    &  & 95.7 &  & 89.5 &  \\
			&    &        &        &  &       &  &       &  &       &  \\ \hline
		\end{tabular}\label{Table_R2}
	
}	
\end{table}

We start by noting that all the estimators displayed a negligible bias in all the scenarios, as expected. Also, both RF1 and RF2 showed very similar performances in terms of bias and efficiency in all the scenarios. This is consistent with the empirical results of \cite{scornet2017tuning}; i.e.,  the strategy based on bootstrap and the strategy based on subsampling with a sampling fraction of 0.63 led to similar performances. The results for RF3 were similar to those obtained for RF1 and RF2, which suggests that the number of observations in each terminal node did not seem to affect the behavior of the point estimator, at least in our experiments. This may not be the case in other scenarios as we illustrate in Section \ref{hyper_param}.


In the case of a linear relationship (which corresponds to the survey variables $Y_1$ and $Y_3$), the GREG estimator was the most efficient, as expected. For instance, for the survey variables $Y_3$, the value of RE for the GREG estimator was about $19.6\%$, whereas the RF1, RF2 and RF3 estimators showed a value of RE of about $34\%$.  In the case of a nonlinear relationship (which corresponds to the survey variables $Y_2$ and $Y_4, \ldots, Y_8$), the GREG estimator was less efficient than RF1, RF2 and RF3. For instance, in the case of the variable $Y_4$, the GREG showed a value of RE of about 81.1\%, whereas the RE of RF estimators lied between $49.0\%$ and $53.1\%.$ For the variables $Y_6, \ldots,Y_8$, the GREG estimator was even less efficient than the Horvitz-Thompson estimator with values of RE ranging from $105\%$ to $127\%$.

In the case of a single explanatory variable (which corresponds to the survey variables $Y_1$ and $Y_2$), RF and regression trees displayed very similar performances. In contrast, the estimators RF1, RF2 and RF3 were more efficient than the CART estimator when the vector of explanatory variables was multi-dimensional (i.e., variables $Y_3, \ldots,Y_9$). In a high-dimensional setting (which corresponds to the survey variables $Y_7$ and $Y_8$), the estimators based on RF estimators were more efficient than the Horvitz-Thompson estimator, even for $n=250$. 
 
\subsection{Performance of the proposed variance estimator} \label{simu2}

We have also investigated the performance of the variance estimator $\widehat{\V}_{rf}$ given by (\ref{varestimator}) in the case of RF with subsampling, in terms of relative bias and coverage of normal-based confidence intervals. We generated a population of size $N=100,000$ according to Model 5. The sample size was set to $n=500; 1,000; 5,000; 10,000; 20,000$ and $50,000$. Here, we present the results for $B=1$ but other values of $B$ led to similar results and are not shown here. As we suspected that the  number of observations in each terminal node, $n_0,$ may have an impact on the behavior of $\widehat{\V}_{rf}$, we used different values for $n_0$ : $n_0 =\floor*{n^{a/20}}$ for $a=1;3;5;7;9;11;13;15;17$. The choice $n_0 = \floor*{n^{11/20}}$ was advocated by \cite{mcconville2019automated}. Figure \ref{fig2} shows the Monte Carlo percent relative bias of $\widehat{\V}_{rf}$ for different values of $n$ and $n_0$. Figure \ref{fig3} shows the Monte Carlo coverage rate of the confidence interval, $\widehat{t}_{rf} \pm 1.96\sqrt{\widehat{\V}_{rf}},$ for different values of $n$ and $n_0$.

From Figure \ref{fig2}, we note that $\widehat{\V}_{rf}$ was severely biased for small values of $n_0$. For a given value of $n_0$, we note that the bias decreased as $n$ increased and for a given value of $n$, the bias decreased as $n_0$ increased.
 The  bias was likely due to the fact that the predicted values in a terminal node are defined as the weighted mean of the sample observations within the same terminal node. As a result, the resulting estimator may suffer from small sample bias for small values of $n_0.$
From Figure \ref{fig3}, we note that the confidence intervals performed poorly for small values of $n_0$, which can be explained by the fact that the variance estimator $\widehat{\V}_{rf}$  lead to substantial underestimation of the true variance in these scenarios. For $n_0 = \floor{13/20}$, the confidence intervals performed relatively well with coverage rates close to the nominal rate.

\begin{figure}[h!]
	\centering
	\includegraphics[width=0.9\textwidth]{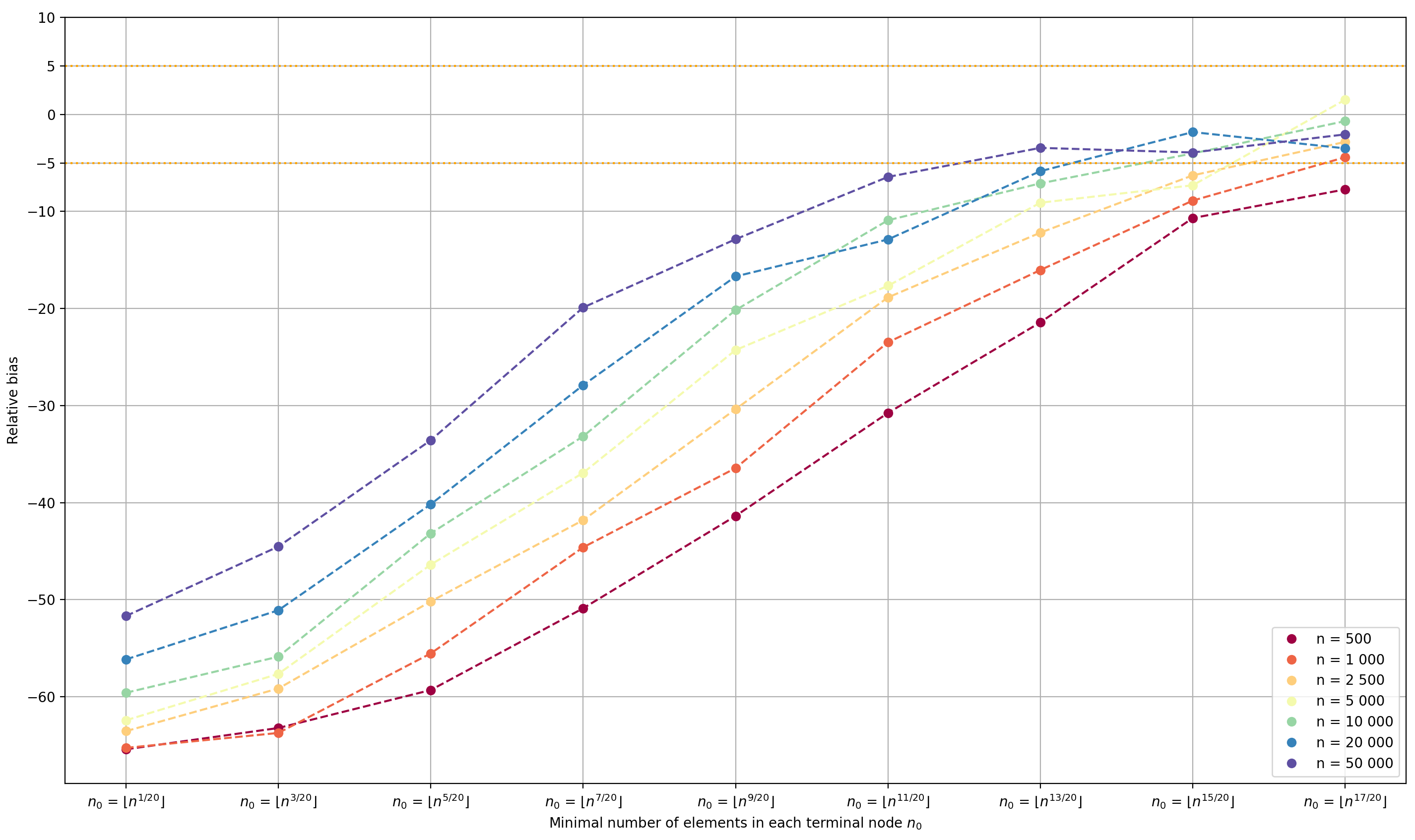}
	\caption{Evolution of the relative bias with respect to $n_0$. } \label{fig2}
\end{figure}



\begin{figure}[h!]
	\centering
	\includegraphics[width=0.9\textwidth]{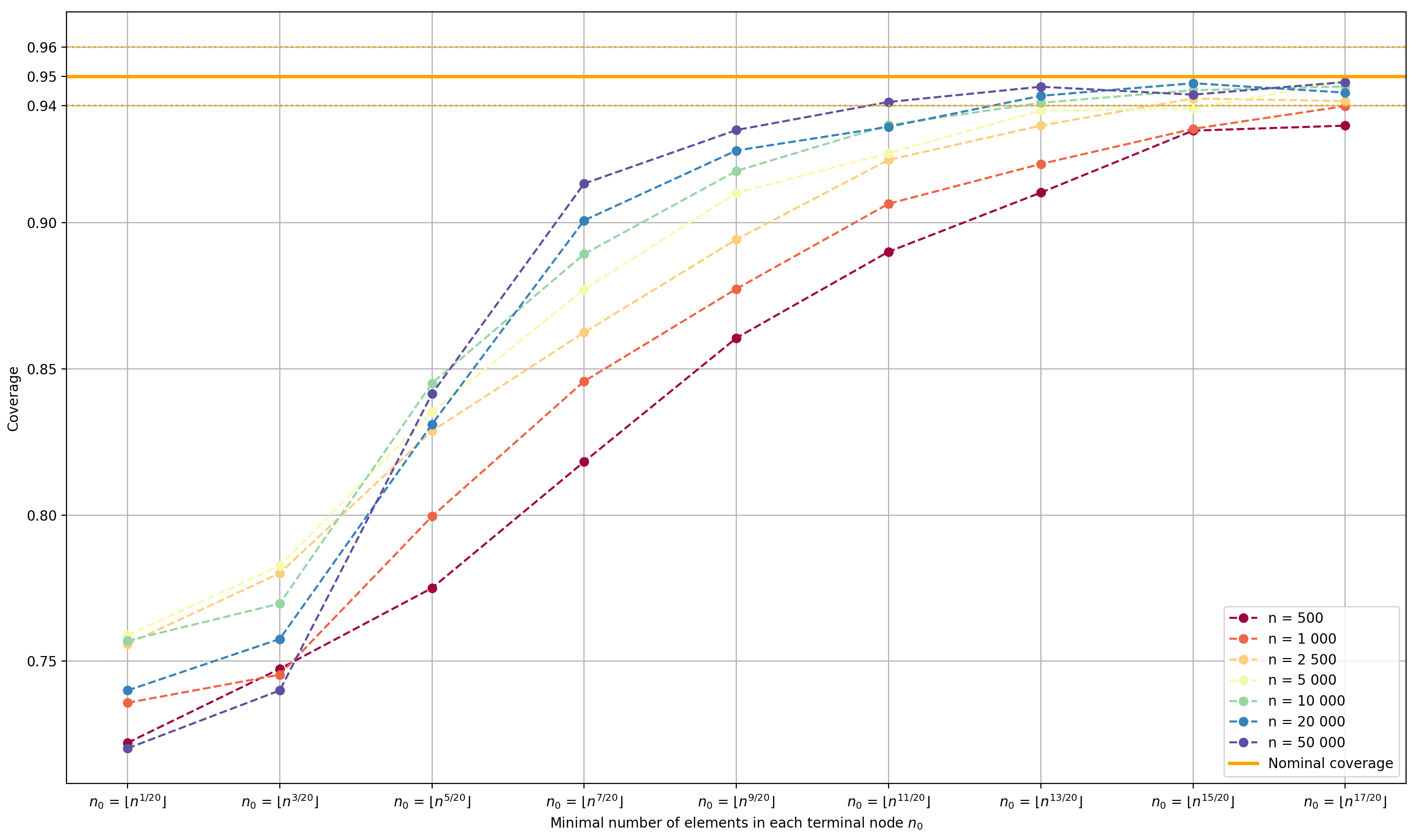}
	\caption{Evolution of  the effective coverage with respect to $n_0$. } \label{fig3}
\end{figure}

 \subsection{Choice of hyper-parameters}\label{hyper_param}

To get a better understanding of how the choice of hyper-parameters impacts the behavior of model-assisted estimators based on RF, we conducted additional scenarios.
We first identified the following important hyper-parameters involved in the RF algorithm of \cite{breiman2001random}: 
\begin{enumerate}[i)]
	\item The minimal number of observations, $n_0,$ in each terminal node; 
	\item The number of trees in the forest $B$; 
	\item The number of variables considered for the search of the best split in the optimization criterion (\ref{opt3});
	\item The resampling mechanism.
\end{enumerate}
The additional scenarios were conducted using a finite population of size $N=10,000$ consisting of the survey variables $Y_5$ and $Y_8$ described in Section \ref{simu1}.  Recall that the working model for the survey variables $Y_5$ included the predictors $X_1$-$X_6$, whereas it included the predictors $V_1-V_{100}$ for the variable $Y_8$ (see Table 1).

From the population, we generated $R = 10,000$ samples, of size $n=1,000,$ according to simple random sampling without replacement. Figure \ref{fig3a} and Figure \ref{fig_n0y8} show, respectively, the relative efficiency of the model-assisted estimators based on RF,  $\widehat{t}_{rf},$ corresponding to $Y_5$ and $Y_8$, respectively, for several values of $n_0$.  Figure \ref{fig3a}  suggests that $\widehat{t}_{rf}$ was much more efficient than the Horvitz-Thompson estimator for small values of $n_0$ and that the value of RE approached 100 as $n_0$ increased. This result can be explained by the fact that small values of $n_0$ led to homogeneous terminal nodes, which in turn led to small residuals $y_k - \widehat{m}_{rf}(\bx_k)$. For the survey variable $Y_8$, we note from Figure \ref{fig_n0y8} that the value of $n_0$ did not seem to affect the efficiency of the corresponding model-assisted estimator.

%
%
\begin{figure}[h!]
	\centering
	\includegraphics[width=0.9\textwidth]{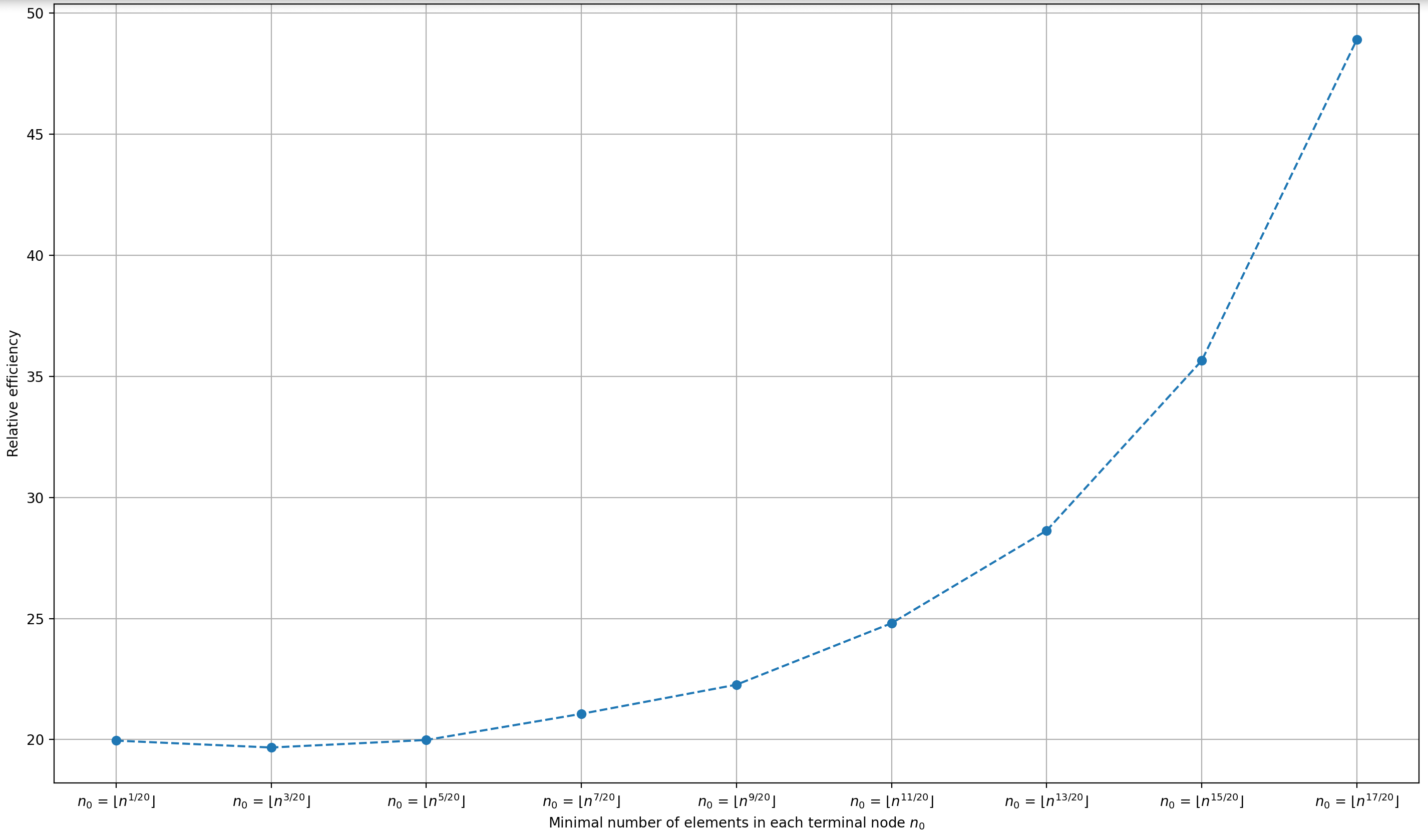}
	\caption{Relative efficiency of $\widehat{t}_{rf}$ for the survey variable $Y_5$ and for several values of $n_0$.}  \label{fig3a}
\end{figure}

\begin{figure}[h!]
	\centering
	\includegraphics[width=0.9\textwidth]{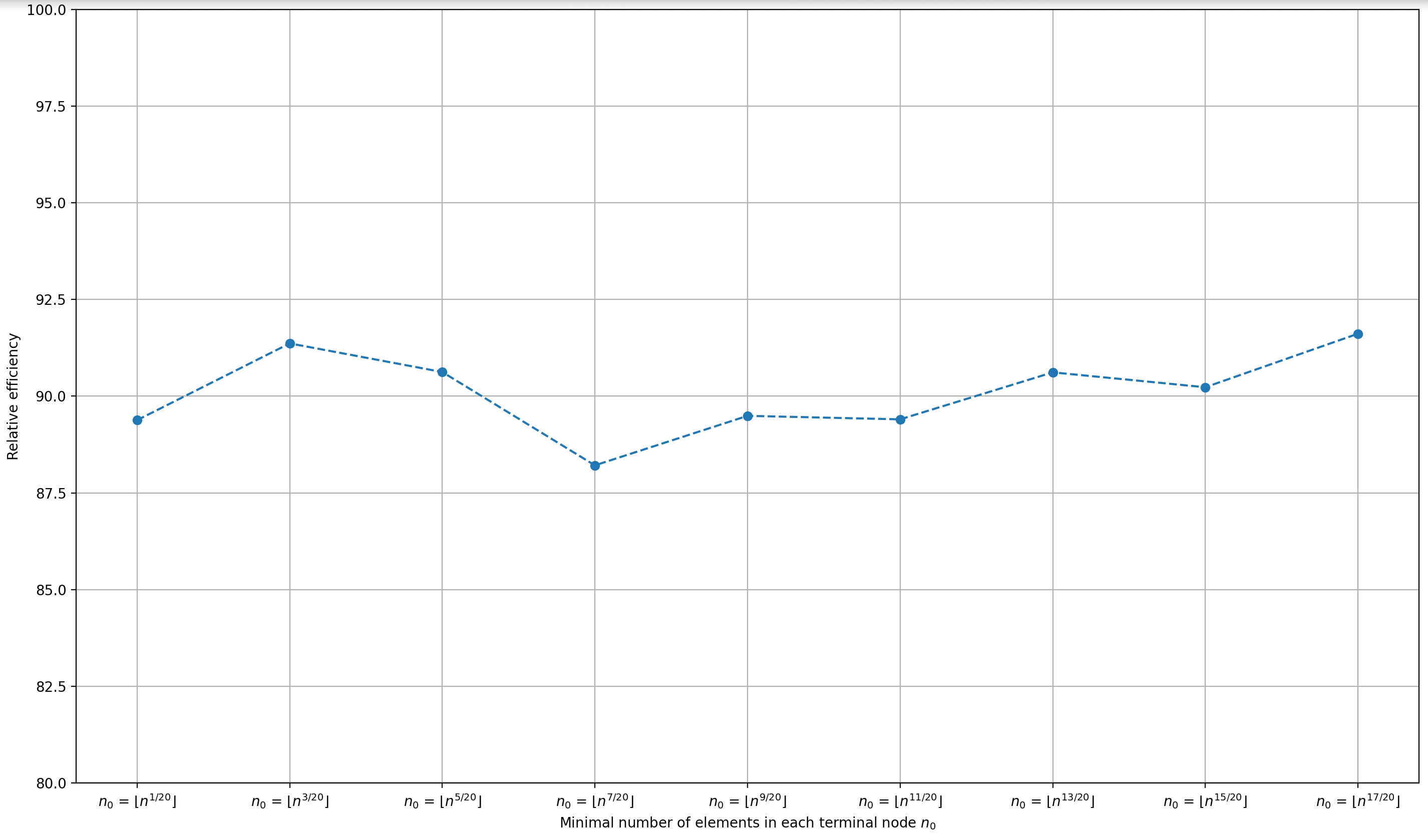}
	\caption{Relative efficiency of $\widehat{t}_{rf}$ for the survey variable $Y_8$ and for several values of $n_0$ } \label{fig_n0y8}
\end{figure}

Figure \ref{fig4} display the relative efficiency for several values of $B,$ the number of trees in the forest for the survey variable $Y_8$. As expected, a small value of $B$ causes the estimator $\widehat{t}_{rf}$ to loose some efficiency.  Figure \ref{fig4} suggests that $B=50$ led to good results and that the efficiency of $\widehat{t}_{rf}$ was not much affected by the number of trees $B$ for $B \ge 50$.  Nevertheless, it is advisable to choose a large value of $B$ if the computational capacity permits. The results for the survey variable $Y_5$ were very similar and so we omit them. 
\begin{figure}[h!]
	\centering
	\includegraphics[width=1\textwidth]{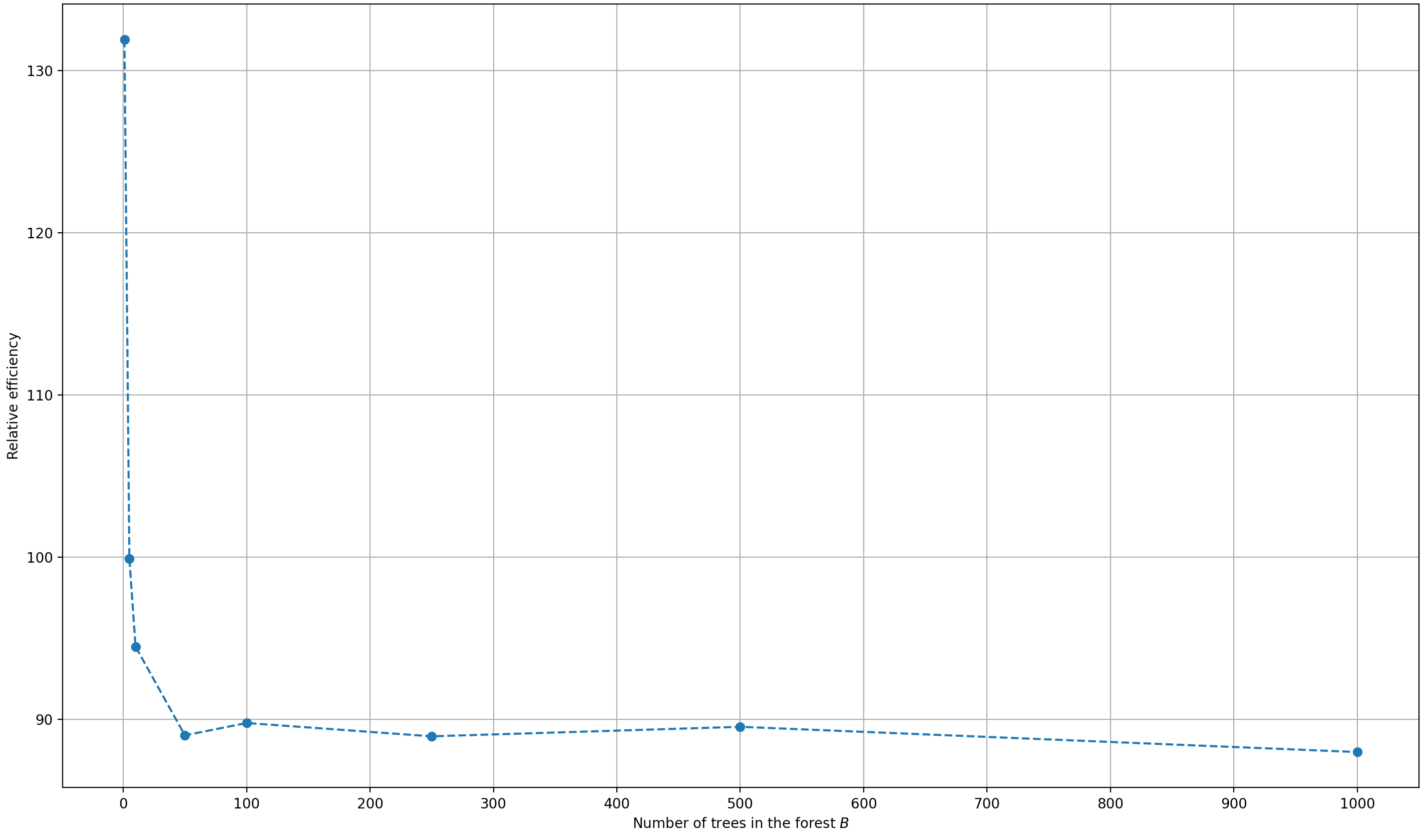}
	\caption{Relative efficiency of $\widehat{t}_{rf}$ for the survey variable $Y_8$ and  for several values of $B$. } \label{fig4}
\end{figure}

In most software packages, the default number of variables considered for the splitting process is $m_{try} = \sqrt{p}$ in case of regression. In our simulations, this choice led to satisfactory results in most scenarios. Figure \ref{fig5} shows the relative efficiency of $\widehat{t}_{rf}$ for the survey variable $Y_8$ and for several values of $m_{try}$. Since the working model for $Y_8$ contained $p=100$ explanatory variables, the default value $\sqrt{p}$ was equal to 10. Although the value $\sqrt{p}=10$ was not the best choice for optimal performances, it led  efficient model-assisted estimators. Furthermore, the relative efficiency did not vary much for values $B$ larger than 30. \\
\begin{figure}[h!]
	\centering
	\includegraphics[width=0.95\textwidth]{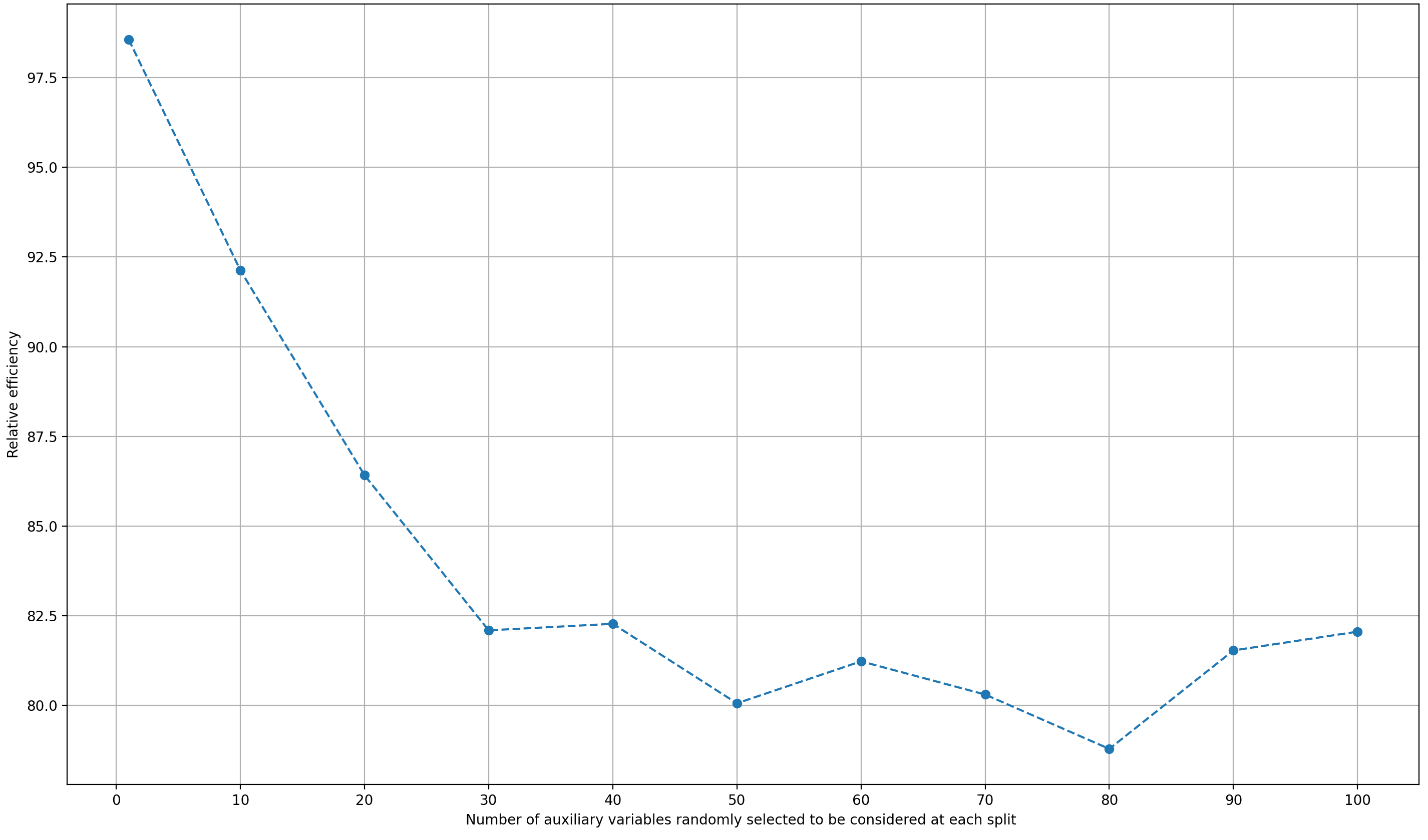}
	\caption{Relative efficiency of $\widehat{t}_{rf}$ for the survey variable $Y_8$ and for several values of $m_{try}$. } \label{fig5}
\end{figure}
Turning to the resampling mechanism, a common choice is to use bootstrap (with replacement), for which some of the results presented in the paper do not apply. However, as noted by several authors (see e.g. \cite{scornet2015consistency}, \cite{wager2014asymptotic} and the references therein) and as shown in our simulations, selecting the points without replacement rather than with replacement does not seem to affect the performance of the resulting model-assisted estimators in most cases. 

\subsection{Real data application}\label{real_data}

In this section, we apply the proposed methods using data collected by M\'ediam\'etrie, the company that measures the media audience in France. In this application, we focus on radio audiences. Each year, M\'ediam\'etrie conducts a survey aiming at gathering detailed information about French individuals 13 years of age and over, including socio-demographic variables and radio listening habits. We used the $2019$ radio audience data that consisted of $N = 26,293$ individuals. 
As a survey variable, we considered the binary variable $Y,$ such that $y_k= 1$ if an individual in the $k$th individual listens to the radio of interest on a daily basis, and $y_k = 0$ otherwise. For confidentiality reasons, we omit the name of the radio broadcaster. We aimed at estimating the proportion of French individuals who listen to the radio of interest on a daily basis, both at the overall population level and for several domains of interest. For each individual on the data set, we had access to $43$ socio-demographic variables (e.g., number of individuals in the household, age of each member of the household, gender, internet habits, occupation, etc.) and their listening habits of $21$ other radios. For each individual, we also knew whether or not the individual listens to any of these 21 radios, for each interval of $7.5$ minutes on a typical day. This led to a data set with $p = 3,882$ variables, among which $3,839$ were binary. 

From the data set, we selected a single sample of size $n=4,000$  according to a stratified sampling design with 5 strata, each stratum corresponding to a French region:
North-East, North-West, Île-de-France, South-east and South-West. The strata sample sizes were determined according to proportional allocation.  We considered the following domains of interest: the sub-population of individuals who connects to the internet everyday, almost every day, once or twice per week, once to three times per month, very rarely, never, the sub-population of individuals with/without children, the sub-populations of individuals living in cities of size (less than $20,000,$ between $20,000$ and $50,000,$  between $50,000$ and $100,000,$ between $100,000$ and $200,000$ and larger than $200,000$) and the sub-population of individuals of size 1, 2, 3, 4, 5 and larger than 5.

We computed the following estimates both at the overall level and at the domain level: (i) The Horvitz-Thompson estimator; (ii) the GREG estimator and (iii) the model-assisted estimator based on RF with hyper-parameters $B=1,000$, $n_0 = \floor {n^{11/20}}$ and $m_{try} = \sqrt{p}$. The working models used for the GREG estimator and the model-assisted RF estimator included $3882$ explanatory variables. In each scenario, we also computed a 95\% confidence interval for the proportion in the population of individuals who listen to the radio of interest. Finally, we computed the ratio of the estimated variances, using the estimated variance of the Horvitz-Thompson estimator, as the reference. Note that the "true value" was known for each domain of interest. The results are given in Table \ref{tab1}; for readability purpose, the results are expressed in percentage.


From Table \ref{tab1}, we note that the Horvitz-Thompson estimator performed relatively well in most scenarios. Because of the large number of predictors, the GREG estimator suffered from significant small sample bias. For instance, the estimate based on the GREG estimate at the overall level was equal to 27.7\%, far from the true value of about 13.5\%. In terms of point estimation,  the model-assisted estimator based on RF led to very similar results than those obtained with the Horvitz-Thompson estimator. However, RF led to substantial improvement in terms of estimated variance. Indeed, out of the 22 domains, the value of RV(RF) was smaller than 0.65 for 20 domains. The results suggest that, unlike the GREG estimator, the model-assisted estimator based on RF was not affected by the large number of explanatory variables in the working model.
The median length of the confidence intervals was equal to 5.4\% for the Horvitz-Thompson estimator, 4.2\% for the RF estimator and 3.8\% for the GREG estimator.


\begin{landscape}

\begin{table}[h!]
	\begin{tabular}{l|c|ccc|cc|ccc}
		\hline
		Domain                          & \multicolumn{1}{l|}{True value} & \multicolumn{1}{l}{HT} & \multicolumn{1}{l}{RF} & \multicolumn{1}{l|}{GREG} & \multicolumn{1}{l}{RV (RF)} & \multicolumn{1}{l|}{RV (GREG)} & \multicolumn{1}{l}{CI HT} & \multicolumn{1}{l}{CI RF} & \multicolumn{1}{l}{CI GREG} \\ \hline
		\multicolumn{1}{c|}{}           &                                 &                        &                        &                           &                           &                              &                           &                           &                             \\
		Overall                         & 13.5                           & 13.7                  & 13.6                  & 27.7                     & 64.6                     & 48.5                        & {[}12.7; 14.7{]}        & {[}12.8; 14.4{]}        & {[}27.0; 28.3{]}          \\
		Freq: Every day                 & 14.6                           & 14.6                  & 14.4                  & 28.1                     & 61.3                     & 44.0                        & {[}13.5; 15.8{]}        & {[}13.5; 15.3{]}        & {[}27.3; 28.9{]}          \\
		Freq: Almost every day          & 14.0                           & 15.5                  & 14.9                  & 31.8                     & 62.0                     & 52.7                        & {[}11.1; 19.8{]}        & {[}11.4; 18.3{]}        & {[}28.6; 35.0{]}          \\
		Freq: 1-2 / week                & 8.7                           & 10.5                  & 11.2                  & 55.1                     & 67.0                     & 62.9                        & {[}6.1; 14.9{]}        & {[}7.6; 14.7{]}        & {[}51.6; 58.5{]}          \\
		Freq: 1-3 / month               & 11.3                           & 16.1                  & 12.8                  & 3.8                     & 47.6                     & 28.1                        & {[}5.8; 26.4{]}        & {[}5.6; 19.9{]}        & {[}-1.5; 9.3{]}         \\
		Freq: Very rarely               & 6.7                           & 7.6                  & 10.0                  & 29.6                     & 60.5                     & 57.7                        & {[}1.4; 13.8{]}        & {[}5.2; 14.8{]}        & {[}24.9; 34.3{]}          \\
		Freq: Never                     & 8.4                           & 2.5                  & 8.1                  & -66.1                    & 103                     & 125                        & {[}-2.0; 7.0{]}       & {[}3.4; 12.7{]}        & {[}-71.2; -61.1{]}        \\
		Children: Yes                   & 11.1                           & 11.5                  & 11.4                  & 33.0                     & 57.2                     & 40.9                        & {[}9.7; 13.3{]}        & {[}10.0; 12.7{]}        & {[}31.9; 34.2{]}          \\
		Children: No                    & 14.4                           & 14.5                  & 14.5                  & 25.7                     & 62.3                     & 47.4                        & {[}13.3; 15.7{]}        & {[}13.5; 15.4{]}        & {[}24.8; 26.5{]}          \\
		Inhabitants: Country            & 12.7                           & 12.8                  & 12.6                  & 45.4                     & 58.9                     & 42.6                        & {[}10.5; 15.1{]}        & {[}10.8; 14.3{]}        & {[}44.0; 46.9{]}          \\
		Inhabitants: \textless{}20K     & 12.9                           & 12.7                  & 12.1                  & 35.2                     & 52.6                     & 37.2                        & {[}10.1; 15.3{]}        & {[}10.2; 14.0{]}        & {[}33.6; 36.8{]}          \\
		Inhabitants: 20-50K             & 12.5                           & 10.8                  & 11.0                  & 36.9                     & 62.2                     & 52.4                        & {[}6.7; 14.9{]}        & {[}7.8; 14.3{]}        & {[}33.9; 39.8{]}          \\
		Inhabitants: 50-100K            & 12.3                           & 11.5                  & 12.2                  & 25.9                     & 58.6                     & 47.7                        & {[}8.8; 14.2{]}        & {[}10.1; 14.3{]}        & {[}24.0; 27.8{]}          \\
		Inhabitants: 100-200K           & 15.5                           & 18.0                  & 17.9                  & 5.8                     & 53.4                     & 38.5                        & {[}14.0; 22.0{]}        & {[}15.0; 20.8{]}        & {[}3.3; 8.2{]}          \\
		Inhabitants: \textgreater{}200K & 14.2                           & 14.0                  & 14.0                  & 27.3                     & 60.4                     & 46.1                        & {[}12.1; 16.0{]}        & {[}12.5; 15.6{]}        & {[}26.0; 28.7{]}          \\
		Inhabitants: Paris              & 14.3                           & 16.3                  & 16.0                  & 0.6                     & 60.7                     & 45.3                        & {[}13.0; 19.6{]}        & {[}13.4; 18.6{]}        & {[}-1.5; 2.8{]}         \\
		N Household: 1                  & 13.9                           & 13.1                  & 13.5                  & 8.2                     & 58.5                     & 44.8                        & {[}11.1; 15.1{]}        & {[}12.0; 15.1{]}        & {[}6.8; 9.5{]}          \\
		N Household: 2                  & 16.3                           & 15.9                  & 16.1                  & 27.3                     & 58.1                     & 45.7                        & {[}14.0; 17.8{]}        & {[}14.6; 17.5{]}        & {[}26.1; 28.6{]}          \\
		N Household: 3                  & 11.7                           & 14.2                  & 13.4                  & 33.9                     & 55.3                     & 37.3                        & {[}11.4; 15.4{]}        & {[}11.5; 16.9{]}        & {[}32.3; 35.6{]}          \\
		N Household: 4                  & 11.3                           & 12.7                  & 11.8                  & 50.3                     & 56.5                     & 41.0                        & {[}10.2; 15.2{]}        & {[}9.9; 13.7{]}        & {[}48.7; 51.9{]}          \\
		N Household: 5                  & 9.7                           & 8.2                  & 9,1                  & 28.4                     & 63.1                     & 46.2                        & {[}4.9; 11.6{]}        & {[}6.5; 11.8{]}        & {[}26.1; 30.6{]}          \\
		N Household: \textgreater{}5    & 6.3                           & 5.0                  & 3.9                  & 50.9                     & 53.8                     & 26.8                        & {[}0.0; 9.5{]}         & {[}0.0; 7.2{]}         & {[}48.5; 53.2{]}          \\
		\multicolumn{1}{c|}{}           &                                 &                        &                        &                           &                           &                              &                           &                           &                             \\ \hline
	\end{tabular}
	\caption{Point and relative variance estimates for the percentage of household who listen to the radio of interest for twenty-two domain of interest and associated 95\% confidence intervals}
	\label{tab1}
\end{table}
\end{landscape}

\section{Final remarks}\label{conclusion}
In this paper, we have introduced a new class of model-assisted estimators based on random forests and derived corresponding variance estimators. We have established the theoretical properties of point and variance estimators obtained through a RF algorithm based on subsampling. The results of an empirical study suggest  that the proposed estimators perform well in a wide variety of settings, unlike the GREG and CART estimators. In practice, this robustness property is especially attractive when the data and the underlying relationships are complex. The application on radio audience data recorded by the French company M\'ediam\'etrie showed that the RF proposed estimator performed well in this high-dimension setting. We have also described a model calibration procedure for handling multiple survey variables, yet producing a single set of weights, which is attractive from a data user's perspective.

In practice, virtually all survey face the problem of missing values. Survey statisticians distinguish unit nonresponse (when no information is collected on a sampled unit) from item nonresponse (when the absence of information is limited to some variables only). The treatment of unit nonresponse starts with postulating a nonresponse model describing the relationship between the response indicators (equal to 1 for respondents and 0 for nonrespondents) and a vector of explanatory variables. The treatment of item nonresponse starts with postulating an imputation model describing the relationship between the variable requiring imputation and a set of explanatory variables. In both unit and item nonresponse, determining a suitable model is crucial. Therefore, regression trees and RF may prove useful for obtaining accurate estimated response propensities and predicted values. To the best of our knowledge, a theoretical treatment of regression trees and RF in the context of either unit nonresponse or item nonresponse in a finite population setting is lacking. These topics are currently under investigation.

Traditionally, survey samples are collected by probability sampling procedures and inferences are conducted with respect to the customary design-based framework. In recent years, there has been a shift of paradigm that can be explained by three main factors: (i) the dramatic decrease of response rates; (ii) the rapid increase in data collection costs; and (iii) the proliferation of nonprobabilistic data sources (e.g., administrative files, web survey panels, social media data, satellite information, etc.). To meet these new challenges, survey statisticians face increasing pressure to utilize these convenient but often uncontrolled data sources.  While such sources provide timely data for a large number of variables and population elements, they often fail to represent the target population of interest because of inherent selection biases. The integration of data from a nonprobability source to data from a probability survey is a topic that is currently being scrutinized by National Statistical Offices. An approach to data integration is statistical matching or mass imputation. Again, regression trees and RF algorithms may be useful techniques in the context of integration of survey data. This topic is currently under investigation. 

In a high-dimensional setting, RF may be used to selecting the most predictive predictors, which in turn may be used in the construction of model-assisted estimators of population totals/means. In this context, issues such as variable selection bias \citep{strobl_2007} in a finite population setting need to be investigated. This will be treated elsewhere.


%

\section*{Appendix}
\small{
\textbf{Proof of Proposition \ref{propr_weight_popRF}}
Since 
\begin{equation}
\widetilde{W}_{\ell} (\bx_k ) = \dfrac{1}{B} \sum_{b=1}^B \dfrac{\psi_\ell^{(b, U)} \mathds{1}_{\bx_\ell\in A^{(U)} \left(\boldsymbol{\mathbf{x}_k} , \theta_b^{(U)}\right)}}{{\widetilde N}(\bx_k, \theta_b^{(U)})  },
\end{equation}
involves positive quantities only, the weights $\widetilde{W}_{\ell} (\bx_k )$ are nonnegative. Since $\psi_\ell^{(b, U)}\in \{0,1\}$ for all $\ell \in U$ and for all $b \in 1, 2, \ldots, B,$  the weight can be bounded as follows:
\begin{align*}
\widetilde{W}_{\ell} (\bx_k ) = \dfrac{1}{B} \sum_{b=1}^B \dfrac{\psi_\ell^{(b, U)} \mathds{1}_{\bx_\ell\in A^{(U)} \left(\boldsymbol{\mathbf{x}_k} , \theta_b^{(U)}\right)}}{{\widetilde N}(\bx_k, \theta_b^{(U)})  } 	&\leqslant \dfrac{1}{B} \sum_{b=1}^B \left(\widetilde N\left(\bx_k , \theta_b^{(U)} \right) \right)^{-1} \\
	&\leqslant  c  N_0^{-1}.
	\end{align*}
	where $ c$ does not depend on $b$ nor on $ k$ or $\ell$. 
To show  ii), fix $b \in 1, 2, ..., B$. The result follows by noting that $\widetilde{W}_{\ell} (\bx_k )=  \left(\widetilde N \left(\bx_k , \theta_b^{(U)} \right) \right)^{-1}$ exactly $\widetilde N \left(\bx_k , \theta_b^{(U)} \right) $ times.

\vspace{0.5cm}

\noindent\textbf{Proof of Proposition \ref{bagging}} Let $\{\widehat{t}^{(b)}\}$ be a sequence of estimators of $t_y$. Then,
\begin{align*}
\V_p \left(\dfrac{1}{B} \sum_{b=1}^B \widehat{t}^{(b)}\right) 
&= \dfrac{1}{B^2}  \sum_{b=1}^B \left(\V_p \left( \widehat{t}^{(b)} \right) + \sum_{b=1}^B \sum_{b\neq b'=1}^B \mathbb{C}or_p \left(\widehat{t}^{(b)},  \widehat{t}^{(b')}\ \right) \V^{1/2} _p \left( \widehat{t}^{(b)} \right) \V^{1/2} _p \left( \widehat{t}^{(b')} \right)\right).  \\
& \leqslant  \dfrac{\V_p(\widehat{t}^{(1)})}{B} +  \V_p(\widehat{t}^{(1)})\max_{b \neq b'} \left|\mathbb{C}or_p \left(\widehat{t}^{(b)},  \widehat{t}^{(b')}\right)\right|
\end{align*}
and $\mbox{Biais}^2_p(B^{-1}\sum_{b=1}^B\widehat{t}^{(b)})=\mbox{Biais}^2_p(\widehat t^{(1)})=MSE_p(\widehat{t}^{(1)})-\V_p(\widehat{t}^{(1)}).$ So, for $B$ large enough:
\begin{eqnarray*}
MSE_p\left(\dfrac{1}{B} \sum_{b=1}^B \widehat{t}^{(b)}\right)\leqslant \V_p(\widehat{t}^{(1)})\max_{b \neq b'} \left|\mathbb{C}or_p \left(\widehat{t}^{(b)},  \widehat{t}^{(b')}\right)\right|-\V_p(\widehat{t}^{(1)})+MSE_p(\widehat{t}^{(1)}).
\end{eqnarray*}

\noindent\textbf{Proof of Proposition \ref{prop_projection}}
Consider the $B$ partitions build at the sample level $\widehat{\mathcal{P}}_S=\{\widehat{\mathcal{P}}_S^{(b)}\}_{b=1}^B.$ For a given $b=1, \ldots, B,$  the partition $\widehat{\mathcal{P}}_S^{(b)}$ is composed by disjointed regions as follows $\widehat{\mathcal{P}}_S^{(b)}=\{A^{(bS)}_{j}\}_{j=1}^{J_{bS}}$ and for each $b,$ consider the $J_{bS}$ dimensional vector $\hat{\mathbf{z}}^{(b)}_k=\left(\mathds{1}_{\bx_k\in A^{(bS)}_{1}}, \ldots,\mathds{1}_{\bx_k\in A^{(bS)}_{J_{bS}}}\right)^{\top}$ where $\mathds{1}_{\bx_k\in A^{(bS)}_{j}}=1$ if  $\bx_k$ belongs to the region ${A}^{(bS)}_{j}$ and zero otherwise for all $j=1, \ldots, J_{bS}$. Since  $\{A^{(bS)}_{j}\}_{j=1}^{J_{bS}}$ is a partition, then $\bx_k$ will belong to only one region and so, the vector $\hat{\mathbf{z}}^{(b)}_k$ will contain only one non zero component. We have $\widehat{m}_{rf}(\mathbf{x}_k)=B^{-1} \sum_{b = 1}^B \widehat{m}^{(b)}_{tree}(\mathbf{x}_k, \theta_b^{(S)})$ and $\widehat{m}^{(b)}_{tree}(\mathbf{x}_k, \theta_b^{(S)})$ can be written as $\widehat{m}^{(b)}_{tree}(\mathbf{x}_k, \theta_b^{(S)})=(\hat{\mathbf{z}}^{(b)}_k)^{\top}\widehat{\boldsymbol{\beta}}^{(b)}$
where $\widehat{\boldsymbol{\beta}}^{(b)}=\left(\sum_{\ell\in S}\pi^{-1}_{\ell}\psi_{\ell}^{(b,S)}\hat{\mathbf{z}}^{(b)}_{\ell}(\hat{\mathbf{z}}^{(b)}_{\ell})^{\top}\right)^{-1}\sum_{\ell\in S}\pi^{-1}_{\ell}\psi_{\ell}^{(b,S)}\hat{\mathbf{z}}^{(b)}_{\ell}y_{\ell}$  (see also the supplementary materiel for more details). Now,
\begin{eqnarray*}
\sum_{k\in S}\frac{y_k-\widehat m_{rf}(\bx_k)}{\pi_k}=\frac{1}{B}\sum_{b=1}^B\sum_{k\in S}\frac{y_k-\widehat{m}^{(b)}_{tree}(\mathbf{x}_k, \theta_b^{(S)})}{\pi_k}&=&\frac{1}{B}\sum_{b=1}^B\sum_{k\in S}\frac{(1-\psi^{(b,S)}_{k})(y_k-\widehat{m}^{(b)}_{tree}(\mathbf{x}_k, \theta_b^{(S)}))}{\pi_k}\\
&&+\frac{1}{B}\sum_{b=1}^B\sum_{k\in S}\frac{\psi^{(b,S)}_{k}(y_k-\widehat{m}^{(b)}_{tree}(\mathbf{x}_k, \theta_b^{(S)}))}{\pi_k}.
\end{eqnarray*}
For each $b,$ consider the $J_{bS}$ dimensional vector $\mathbf{1}_{J_{bS}}$ whose elements are all equal to one, we have then $\mathbf{1}^{\top}_{J_{bS}}\hat{\mathbf{z}}^{(b)}_{k}=1$ for all $k,$ so
$$
\sum_{k\in S}\frac{\psi^{(b,S)}_{k}}{\pi_k}\widehat{m}^{(b)}_{tree}(\mathbf{x}_k, \theta_b^{(S)})=\sum_{k\in S}\frac{\psi^{(b,S)}_{k}}{\pi_k}(\hat{\mathbf{z}}^{(b)}_k)^{\top}\widehat{\boldsymbol{\beta}}^{(b)}=\mathbf{1}^{\top}_{J_{bS}}\sum_{k\in S}\frac{\psi^{(b,S)}_{k}}{\pi_k}\hat{\mathbf{z}}^{(b)}_{k}(\hat{\mathbf{z}}^{(b)}_k)^{\top}\widehat{\boldsymbol{\beta}}^{(b)}=\sum_{\ell\in S}\frac{\psi^{(b,S)}_{\ell}}{\pi_{\ell}}y_{\ell}.
$$
\bibliographystyle{apalike}
\bibliography{biblio}
\addcontentsline{toc}{section}{References}

\end{document}